\documentclass[a4paper, USenglish, cleveref, nameinlink, autoref, pagebackref]{lipics-v2019}

\newif\ifpaper
\papertrue
\ifpaper
  \nolinenumbers
  \hideLIPIcs
\fi

\usepackage{tikz}
\usepackage[utf8]{inputenc} 
\usepackage{amsmath}
\usepackage[disable]{todonotes}
\usepackage{dsfont}
\usepackage{nicefrac}

\usepackage{multirow,booktabs}

\usepackage[noend]{algpseudocode}
\usepackage{algorithm}

\usepackage[numbers,sort,sectionbib]{natbib}



\theoremstyle{plain}
\newtheorem{observation}[theorem]{Observation}

\crefformat{footnote}{#2\footnotemark[#1]#3}

\title{Detecting and Enumerating Small Induced Subgraphs in $c$-Closed Graphs}

\titlerunning{Detecting and Enumerating Small Induced Subgraphs in $c$-Closed Graphs}

\newcommand{\tuaddress}{Algorithmics and Computational Complexity, Faculty IV, TU Berlin, Germany}

\author
{Tomohiro Koana}
{\tuaddress}
{tomohiro.koana@tu-berlin.de}
{https://orcid.org/0000-0002-8684-0611} 
{Supported by the DFG project FPTinP (NI 369/16).} 

\author
{Andr\'{e} Nichterlein}
{\tuaddress}
{andre.nichterlein@tu-berlin.de}
{https://orcid.org/0000-0001-7451-9401} 
{} 

\authorrunning{Tomohiro Koana, André Nichterlein}

\Copyright{Tomohiro Koana, André Nichterlein}

\ccsdesc[300]{Theory of computation~Graph algorithms analysis}

\keywords{FPT in P, combinatorial algorithms, subgraph detection, subgraph enumeration} 


\EventEditors{John Q. Open and Joan R. Access}
\EventNoEds{2}
\EventLongTitle{42nd Conference on Very Important Topics (CVIT 2016)}
\EventShortTitle{CVIT 2016}
\EventAcronym{CVIT}
\EventYear{2016}
\EventDate{December 24--27, 2016}
\EventLocation{Little Whinging, United Kingdom}
\EventLogo{}
\SeriesVolume{42}
\ArticleNo{23}

\Crefname{theorem}{Theorem}{Theorems}
\crefname{theorem}{Thm.}{Thms.}
\Crefname{observation}{Observation}{Observations}
\crefname{observation}{Obs.}{Obs.}
\Crefname{corollary}{Corollary}{Corollaries}
\crefname{corollary}{Cor.}{Cor.}

\algnewcommand\Continue{\textbf{continue}}

\newcommand{\co}[1]{\overline{#1}}

\begin{document}

\maketitle

\begin{abstract}
	\citeauthor{FRSWW20} [SIAM J. Comp. 2020] introduced a new parameter, called $c$-closure, for a parameterized study of clique enumeration problems.
	A graph~$G$ is $c$-closed if every pair of vertices with at least~$c$ common neighbors is adjacent.
	The $c$-closure of~$G$ is the smallest~$c$ such that~$G$ is $c$-closed.
	We systematically explore the impact of $c$-closure on the computational complexity of detecting and enumerating small induced subgraphs.
	More precisely, for each graph~$H$ on three or four vertices, we investigate parameterized polynomial-time algorithms for detecting~$H$ and for enumerating all occurrences of~$H$ in a given $c$-closed graph.
\end{abstract}

\ifpaper\else\newpage\fi

\section{Introduction}

Detecting and enumerating a fixed subgraph~$H$ in a given host graph~$G$ is an important and well-studied graph problem.
Even the special cases for small subgraphs~$H$ have many applications e.\,g.\ in the analysis of protein--protein networks~\cite{PCJ06} or of social networks~\cite{SPRH06}.

We focus on the problem variants where~$H$ has three or four vertices, resulting in 15 problem variants.
Out of these 15 candidates for~$H$, only three are known to be detectable in linear time: a path on three~$P_3$ or four~$P_4$ vertices~\cite{CPS85} and the complement of a~$P_3$ (an edge plus an isolated vertex).
For the remaining 12 subgraphs the (theoretically) fastest known algorithms are based on fast matrix multiplication~\cite{WWWY15,IR78} and mostly run in~$\widetilde{O}(n^\omega)$\footnote{The $\widetilde{O}$-notation suppresses polylogarithmic factors. Here~$O(n^\omega)$ is the time to multiply two~$n \times n$-matrices; it is known that~$\omega < 2.3728639$~\cite{Gal14}.} time ($O(n^{3.257})$ for clique and independent set on four vertices).
However, the fast matrix multiplication is not practical due to its large overhead.
We will thus focus on ``combinatorial'' algorithms.
Although this term is not well-defined, it is usually used to denote algorithms without any use of fast matrix multiplication.
These algorithms are often more efficient in practice.

Finding a combinatorial algorithm that detects a triangle in $O(n^{3 - \varepsilon})$ for $\varepsilon > 0$ seems challenging to an extent that it was conjectured to not exist~\cite{AVW14,WW18}. 
To circumvent this difficulty, we follow the spirit of ``parameterization for polynomial-time solvable problems'' (also referred to as ``FPT~in~P'')~\cite{GMN17}. 
Our parameter of choice is the recently defined $c$-closure which captures a natural property found often in social networks~\cite{FRSWW20}: two vertices with many common neighbors tend to be adjacent.
More formally, the $c$-closure of a graph is the smallest integer~$c$ such that any two non-adjacent vertices have less than~$c$ common neighbors.

An advantage of the $c$-closure is that its reasonably small in social networks with thousands of vertices~\cite{FRSWW20}.
We provide FPT~in~P algorithms with a small polynomial dependency on~$c$, thus parameter-values which are prohibitively high for exponential-time algorithms are still be acceptable in our setting.

Besides induced subgraph detection algorithms we also investigate the enumeration problems.
Here, we settle for all but four out of the 15 subgraphs the complexity on $c$-closed graphs; see \Cref{tab:overview} for an overview on our results and existing work.

\newcommand{\tight}{\ensuremath{^T}}
\begin{table}[t!]
	\centering
	\caption{
		Overview for combinatorial algorithms. The algorithms of Williams \cite{WWWY15} are randomized; we use~$\omega = 3$ here to account for combinatorial algorithms. 
		The graphs are sorted first by the number of vertices and second by the number of edges.
		For each graph the first row refers to the running times on general graphs and the subsequent rows on $c$-closed graphs.
		Running time lower bounds (indicated by $\Theta(\cdot)$ and~$\Omega(\cdot)$) are mostly based on the number of possible occurrences of the subgraph in general graphs (first row) and $c$-closed graphs for constant~$c$ (subsequent rows).
		Results without reference are trivial and/or folklore and are discussed in \cref{sec:three-vertex,sec:four-vertex} for completeness.
	}
	\label{tab:overview}
	\begin{tabular}{lcrlrl}
		\toprule       
				& & \multicolumn{2}{c}{detection}					& \multicolumn{2}{c}{enumeration}\\ 
				\midrule
		\multirow{8}{*}{\rotatebox[origin=c]{90}{\cref{sec:three-vertex}}} & 
		\multirow{2}{*}{$\co{K_3}$} %
				& $O(n^3)$				&					& \multirow{2}{*}{$\Theta(n^3)$} &  \multirow{2}{*}{[\cref{obs:n^k-subgraphs}]} \\
		&		& $O(m + c^3)$ 			& \cite{KKS20}			& & \\ \cmidrule(lr){3-4} \cmidrule(lr){5-6} 
		& \multirow{1}{*}{$\co{P_3}$} %
				& \multirow{1}{*}{$\Theta(n+m)$} &					& \multirow{1}{*}{$\Theta(nm)$} & \multirow{1}{*}{[\cref{obs:n^k-subgraphs}]} \\
		\cmidrule(lr){3-4} \cmidrule(lr){5-6} 
		&\multirow{2}{*}{$P_3$} %
				& \multirow{2}{*}{$\Theta(n + m)$ }&				& $\Theta(nm)$ & \\
		&		& 						&					& $O(cn^2 + m^{\nicefrac{3}{2}})$ & [\cref{thm:enump3}] \\ \cmidrule(lr){3-4} \cmidrule(lr){5-6} 
		&\multirow{2}{*}{Triangle} %
				& $O(m^{\nicefrac{3}{2}})$ 	& \cite{IR78}			& \multirow{2}{*}{$\Theta(m^{\nicefrac{3}{2}})$} & \multirow{2}{*}{\cite{IR78}} \\
		&		& $O(cn^2)$, $O(c^{\nicefrac{1}{3}} m^{\nicefrac{4}{3}})$ & [\cref{thm:findk3,thm:findk3-2}] & & \\ \midrule
		\multirow{25}{*}{\rotatebox[origin=c]{90}{\cref{sec:four-vertex}}} & 
		\multirow{2}{*}{$\co{K_4}$} %
				& $O(n^4)$ 				&					& \multirow{2}{*}{$\Theta(n^4)$} & \multirow{2}{*}{[\cref{obs:n^k-subgraphs}]} \\ 
		&		& $O(m + c^4)$ 			& \cite{KKS20}			& & \\ \cmidrule(lr){3-4} \cmidrule(lr){5-6} 
		&\multirow{2}{*}{co-diamond} %
				& $O(m^{\nicefrac{3}{2}})$ 	& \cite{WWWY15}		& \multirow{2}{*}{$\Theta(n^2 m)$} & \multirow{2}{*}{[\cref{obs:n^k-subgraphs}]} \\ 
		&		& $O(m + c^2n)$ 			& [\cref{thm:codiamond}]	& & \\ \cmidrule(lr){3-4} \cmidrule(lr){5-6} 
		&\multirow{2}{*}{co-paw} %
				& $O(m^{\nicefrac{3}{2}})$ 	& \cite{WWWY15}		& $\Theta(n m^2)$, $\Theta(n^2 m)$ & \\
		&		& $O(m + c^3)$ 			& [\cref{cor:findcopaw}]	& $O(c n^3)$, $\Omega(n^3)$ & [\cref{thm:enumh}] \\ \cmidrule(lr){3-4} \cmidrule(lr){5-6} 
		&\multirow{2}{*}{co-square} %
				& $O(n^3)$, $O(m^{\nicefrac{11}{7}})$ & \cite{WWWY15}	& \multirow{2}{*}{$\Theta(m^2)$} & \multirow{2}{*}{[\cref{obs:n^k-subgraphs}]} \\ 
		&		& $O(m + c^{\nicefrac{44}{7}})$ & [\cref{thm:cosquare}]& & \\ \cmidrule(lr){3-4} \cmidrule(lr){5-6} 
		&\multirow{3}{*}{$P_4$} %
				& \multirow{3}{*}{$\Theta(n+m)$} & \multirow{3}{*}{\cite{CPS85}}		& $\Theta(m^2)$ & \\
		&		&						&					& $O(cnm)$ & [\cref{obs:enumP4}] \\ 
		&		&						&					& $\Omega(n^{2.5})$, $\Omega(m^{2})$ & [\cref{thm:p4c4lowerbounds}] \\ \cmidrule(lr){3-4} \cmidrule(lr){5-6} 
		&\multirow{2}{*}{claw} %
				& $O(m^{\nicefrac{3}{2}})$ 	& \cite{WWWY15}		& $\Theta(n^2 m)$ & \\
		&		& \multicolumn{2}{c}{?}							& $O(c n^3)$, $\Omega(n^3)$ & [\cref{thm:enumh}] \\ \cmidrule(lr){3-4} \cmidrule(lr){5-6} 
		&\multirow{2}{*}{co-claw} %
				& $O(m^{\nicefrac{3}{2}})$ 	& \cite{WWWY15}		& \multirow{2}{*}{$\Theta(n m^{\nicefrac{3}{2}})$} & \multirow{2}{*}{[\cref{obs:n^k-subgraphs}]} \\ 
		&		& \multicolumn{2}{c}{?}							& &\\ \cmidrule(lr){3-4} \cmidrule(lr){5-6} 
		&\multirow{2}{*}{paw} %
				& $O(m^{\nicefrac{3}{2}})$ 	& \cite{WWWY15}		& $\Theta(m^2)$ & \\
		&		& $O(cn^2)$, $O(c^{\nicefrac{1}{3}} m^{\nicefrac{4}{3}})$ & [\cref{cor:findpaw}]  & $O(c n m)$, $\Omega(n^3)$ & [\cref{obs:enumP4}] \\ \cmidrule(lr){3-4} \cmidrule(lr){5-6} 
		&\multirow{3}{*}{square} %
				& $O(n^3)$, $O(m^{\nicefrac{11}{7}})$ & \cite{WWWY15}	& $\Theta(m^2)$ \\
		&		& $O(c m^{\nicefrac{3}{2}})$ 	& [\cref{thm:enumc4}]	& $O(c^2 n^2 + m^{\nicefrac{3}{2}})$ & [\cref{thm:enumh}] \\
		&		& 						&					& $O(c m^{\nicefrac{3}{2}})$ & [\cref{thm:enumc4}] \\ 
		&		&						&					& $\Omega(n^{2})$, $\Omega(m^{\nicefrac{4}{3}})$ & [\cref{thm:p4c4lowerbounds}] \\ \cmidrule(lr){3-4} \cmidrule(lr){5-6} 
		&\multirow{3}{*}{diamond} %
				& $O(m^{\nicefrac{3}{2}})$  	& \cite{EG04}			& $\Theta(m^2)$ & \\
		&		& \multicolumn{2}{c}{?}							& $O(c^2 n^2 + m^{\nicefrac{3}{2}})$ & [\cref{thm:enumh}] \\ 
		&		&						&					& $\Omega(n^{2})$ & \\ \cmidrule(lr){3-4} \cmidrule(lr){5-6} 
		&\multirow{2}{*}{$K_4$} %
				& $O(m^2)$ 				&					& \multirow{2}{*}{$\Theta(m^2)$} & \multirow{2}{*}{[\cref{obs:n^k-subgraphs}]} \\
		&		& $O(cn^3)$ 				& [\cref{thm:findkk}]	&  \\ 
		\bottomrule
	\end{tabular}
\end{table}

\subparagraph*{Further Related Work.}
%
We refer to \Cref{tab:overview} for an overview on prior results on subgraph detection algorithms for three- and four-vertex subgraphs.
As to subgraph enumeration, it is folklore that for a graph $H$ on $k$ vertices, an algorithm that enumerates all induced copies of $H$ takes $\Theta(n^k)$ time (see \Cref{obs:n^k-subgraphs}).
For enumerating triangles, an $O(m^{\nicefrac{3}{2}})$-time algorithm is provided by \citet{IR78}.

As to FPT~in~P, there are a few works on detecting and counting triangles~\cite{KN18,BFNN19,CDP19}.
\citet{KL19} provided parameterized algorithms for several induced subgraph detection problems where the subgraph has four vertices.
Their parameter is the order of the largest clique in the host graph.

Being a relatively new parameter, there is not much work on parameterized algorithms exploiting the $c$-closure~\cite{FRSWW20,KKS20,KKS20b}.
All maximal cliques can be enumerated in $3^{\nicefrac{c}{3}} \cdot n^{O(1)}$ time \cite{FRSWW20}.
For constant $c$, they all showed that there are $O(n^{2 - 2^{1 - c}})$ maximal cliques in $c$-closed graphs, which was previously shown for $c = 2$ \cite{EHSS11}.
Dense subgraphs such as $s$-plexes, $s$-defective cliques, and bicliques can be enumerated in $2^c \cdot n^{O(1)}$ time \cite{KKS20b}.
Moreover, polynomial kernels for several NP-hard graph problems are known \cite{KKS20}.

%

\section{Preliminaries}

For $k \in \mathds{N}$, let $[k]$ denote the set $\{ 1, \dots, k \}$.
Throughout the paper, we use $G$ to denote an undirected graph.
Let $V(G)$ and $E(G)$ be the vertex set and the edge set of $G$, respectively, with~$n = |V(G)|$ and~$m = |E(G)|$.
We will use $\overline{G}$ for the complement of $G$.
For a vertex $v \in V(G)$, let $N(v) = \{ u \mid uv \in E(G) \}$ and $N[v] = N(v) \cup \{ v \}$ denote its open and closed neighborhood, respectively.
The degree of a vertex $v$ is $\deg(v) = |N(v)|$.
A vertex~$v$ is universal if~$\deg(v) = n$.
For a vertex set $S \subseteq V(G)$, the notation $G[S]$ is used for the subgraph induced by $S$.
The path on $k$ vertices is denoted by $P_k$, the complete graph on~$k$ vertices is denoted by $K_k$, and the complete bipartite graph with the parts containing~$k$ and~$\ell$ vertices is denoted by~$K_{k,\ell}$.

\begin{definition}[\cite{FRSWW20}]
	A graph $G$ is \emph{$c$-closed} if $|N(u) \cap N(v)| < c$ for all pairs of nonadjacent vertices $u, v$.
	The \emph{$c$-closure} of $G$ is the smallest integer $c$ such that $G$ is $c$-closed.
\end{definition}

Considering the landscape of graph parameters, the $c$-closure is obviously ``smaller'' than the maximum vertex degree~$\Delta$ of the graph, i.\,e., $c \le \Delta + 1$.
Other (common) parameters smaller than~$\Delta$ are minimum degree, degeneracy, acyclic chromatic number, and $h$-index~\cite{ParamHier19}.
These parameters are unrelated to $c$-closure: they are all~$O(1)$ on a~$K_{2,n-2}$ ($c$-closure is~$n-1$) and large~$\Omega(n)$ on a~$K_n$ ($c$-closure is 1).
These two examples also show that there are graphs with~$c \cdot n \in \Theta(m^2)$ and graphs with~$c \cdot n \in \Theta(\sqrt{m})$.

\section{Three-Vertex Induced Subgraphs}\label{sec:three-vertex}

In this section, we consider the three-vertex induced subgraphs.
We start with the edgeless subgraph.
For constant~$k$, it was shown that finding an independent set~$\co{K_k}$ on~$k$ vertices in~$c$-closed graphs can be found in~$O(m + c^k)$ time~\cite{KKS20}.
Thus, a~$\co{K_3}$ can be detected in~$O(m + c^3)$ time.
Enumerating all~$\co{K_k}$ cannot be done in~$o(n^k)$ time, even on $c$-closed graphs as an edgeless graph on~$n$ vertices is $1$-closed and contains~$\Theta(n^k)$ many~$\co{K_k}$'s.
This settles finding and enumerating~$\co{K_3}$'s.

As a side result, we remark that the $O(m + c^k)$-time algorithm for detecting a~$\co{K_k}$ can be used as subroutine to find stars.
However, note that the subsequent result is not useful for finding stars few leafs as the existing algorithms for finding a~$K_{1,2} = P_3$ and a claw~$K_{1,3}$ on general graphs are faster (see \Cref{tab:overview}).
\begin{theorem}
  \label{thm:findclaw}
  There is an $O(c^k m^{1 - \nicefrac{1}{k}} + m^{2 - \nicefrac{1}{k}})$-time algorithm to find an induced $K_{1, k}$ for constant $k$.
\end{theorem}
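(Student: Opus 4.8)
The plan is to reduce the search at each potential center vertex $v$ to the problem of finding an independent set $\co{K_k}$ inside the neighborhood $G[N(v)]$: an induced $K_{1,k}$ with center $v$ is exactly a $\co{K_k}$ among the neighbors of $v$. The key enabling fact is that $G[N(v)]$ inherits $c$-closedness from $G$, since two non-adjacent vertices of $N(v)$ have no more common neighbors in $G[N(v)]$ than in $G$; hence the $O(m' + c^k)$-time $\co{K_k}$-detection algorithm of \cite{KKS20} applies to each neighborhood with the same value of $c$. Running that subroutine on every vertex would cost $O(n c^k)$, which exceeds the target on sparse graphs, so the whole point of the argument is to split the vertices by degree and to pay the $c^k$ factor only on the few high-degree vertices.

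Fix the threshold $\tau = m^{1/k}$. For every vertex $v$ with $\deg(v) < \tau$ I would simply brute-force over all $k$-subsets of $N(v)$, testing each for independence; after an $O(m)$ preprocessing step that makes adjacency queries run in $O(1)$ time, this costs $O(\deg(v)^k)$ per vertex for constant $k$. Because $\deg(v) < \tau$, we have $\deg(v)^k \le \deg(v)\cdot \tau^{k-1}$, so summing over all low-degree vertices gives $\tau^{k-1}\sum_v \deg(v) = 2m\cdot m^{(k-1)/k} = O(m^{2-1/k})$, which is the second term of the bound.

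For every vertex $v$ with $\deg(v) \ge \tau$ I would instead invoke the $\co{K_k}$ subroutine on $G[N(v)]$ (and return the witnessing $\co{K_k}$ together with $v$). There are at most $2m/\tau = 2m^{1-1/k}$ such vertices, so the additive $c^k$ contributions sum to $O(c^k m^{1-1/k})$, the first term. To account for the $O(m')$ parts I would also need the cost of materialising each $G[N(v)]$: marking $N(v)$ and scanning the adjacency lists of its members builds it in $O(\sum_{u\in N(v)}\deg(u))$ time, and summed over the high-degree vertices this is $\sum_u \deg(u)\cdot|\{v\in N(u):\deg(v)\ge\tau\}| \le (2m/\tau)\sum_u \deg(u) = 4m^2/\tau = O(m^{2-1/k})$, which also dominates the total number of edges read by the subroutine. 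Adding the three contributions yields the claimed $O(c^k m^{1-1/k} + m^{2-1/k})$ running time.

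The step that has to be gotten right is the balance of the threshold: the $c^k$ factor must be incurred only by the $O(m^{1-1/k})$ vertices of degree at least $\tau$, while the low-degree vertices stay cheap precisely because $\deg(v)^{k-1}\le \tau^{k-1}=m^{1-1/k}$ telescopes against $\sum_v\deg(v)=2m$. The remaining points are routine: that $c$-closedness is hereditary under induced subgraphs, that the subroutine can be made to output an actual $\co{K_k}$ (and hence the star), and that each high-degree neighborhood can be constructed within budget. Of these, the neighborhood-construction estimate is the only one that genuinely needs the high-degree \emph{count} $2m/\tau$ rather than a global sum, so I would be careful to phrase that bound through the number of high-degree vertices incident to each $u$.
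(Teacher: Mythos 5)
Your proof is correct and follows essentially the same route as the paper's: split potential centers by degree at the threshold $m^{\nicefrac{1}{k}}$, handle low-degree centers by brute force in $O(m^{2 - \nicefrac{1}{k}})$ total time, and run the $O(m + c^k)$-time $\overline{K_k}$-detection algorithm of \citet{KKS20} inside the neighborhood of each of the $O(m^{1 - \nicefrac{1}{k}})$ high-degree vertices. Your write-up is in fact somewhat more careful than the paper's (you make explicit the hereditariness of $c$-closure and the cost of materializing each $G[N(v)]$, which the paper absorbs into the $O(m)$ term of the subroutine), but the decomposition, the threshold, and the key subroutine are identical.
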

\begin{proof}
  Let $D = m^{\nicefrac{1}{k}}$.
  An induced $K_{1, k}$ in which the center is of degree at most $D$ can be found in $O(D^{k - 1} m) = O(m^{2 - \nicefrac{1}{k}})$ time:
  There are $m$ choices for the center and one of its leaves and $O(D^{k - 1})$ choices for the other $k - 1$ leaves.
  Hence, it remains to find a $K_{1, k}$ where the center is a vertex~$v$ of degree at least $D$.
  This can be done by looking for an independent set in~$N(v)$.
  Recall that an independent set of order $k$ can be found in $O(m + c^k)$ time \cite{KKS20}.
  Since there are $O(\nicefrac{m}{D})$ vertices of degree at least $D$, the overall running time is~$O(\nicefrac{m}{D} \cdot (m + c^k)) = O(c^k m^{1 - \nicefrac{1}{k}} + m^{2 - \nicefrac{1}{k}})$.
\end{proof}

With the case of~$\co{K_3}$ being settled, we turn to the remaining three-vertex graphs: $\co{P_3}$, $P_3$, and~$K_3$.
As already mentioned by \citet{WWWY15}, one can find a~$\co{P_3}$ in linear time. 
As we were unable to find the corresponding algorithm in the literature, we provide one for completeness.

\begin{observation}[folklore]
	\label{thm:findcop3}
	There is an~$O(n+m)$-time algorithm to find an induced~$\co{P_3}$.
\end{observation}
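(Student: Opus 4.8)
The plan is to reduce the problem to recognizing complete multipartite graphs. An induced~$\co{P_3}$ is precisely an edge $uw \in E(G)$ together with a third vertex $v$ adjacent to neither $u$ nor $w$; equivalently, it is a triple witnessing that the non-adjacency relation on $V(G)$ is not transitive. Hence $G$ contains no induced~$\co{P_3}$ if and only if non-adjacency, made reflexive, is an equivalence relation, i.\,e.\ if and only if $G$ is complete multipartite with the equivalence classes as its parts. Dually, $G$ contains an induced~$\co{P_3}$ if and only if $\co{G}$ contains an induced $P_3$, that is, $\co{G}$ is not a disjoint union of cliques. So it suffices to decide whether $\co{G}$ is a disjoint union of cliques and, if not, to output a witnessing triple.

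First I would compute the connected components of $\co{G}$ -- the co-components $C_1, \dots, C_k$ of $G$ -- in $O(n+m)$ time. Although $\co{G}$ may be dense, these components can be found without materializing $\co{G}$ by a breadth-first search that keeps the yet-unvisited vertices in a doubly linked list: to expand a vertex $x$ I mark $N(x)$ in a boolean array and then walk the list of unvisited vertices, enqueuing and deleting every unvisited vertex $u$ with $xu \notin E(G)$ (a $\co{G}$-edge). Each vertex is deleted at most once and each adjacency is inspected $O(1)$ times, so this standard complement traversal runs in $O(n+m)$ total. With the co-components at hand I scan $E(G)$ once: $G$ is complete multipartite exactly when every edge joins two different co-components, and in that case each $C_i$ is independent and I report that no induced~$\co{P_3}$ exists. (This subsumes the degenerate inputs: an edgeless graph and $K_n$ are both reported correctly, and a disconnected $G$ has $\co{G}$ connected, so any of its edges lies inside the single co-component.)

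Otherwise some edge $uw \in E(G)$ has both endpoints in the same co-component $C_i$. Since $uw \notin E(\co{G})$, the vertices $u$ and $w$ lie at $\co{G}$-distance at least two inside $C_i$, which is connected in $\co{G}$. I would recover a witness with one more complement breadth-first search from $u$ restricted to $C_i$, stopping as soon as depth two is reached along a shortest $\co{G}$-path $u = x_0, x_1, x_2, \dots$ toward $w$. The first three vertices $\{x_0, x_1, x_2\}$ then induce the desired $\co{P_3}$: the path edges $x_0 x_1$ and $x_1 x_2$ are non-edges of $G$, while $x_0 x_2 \in E(G)$, since otherwise $x_2$ would be a $\co{G}$-neighbour of $x_0$ and hence already at depth one. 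All phases run in $O(n+m)$ time, giving the claimed bound.

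The main obstacle is purely the running time, not the structure: the characterization via complete multipartite graphs is immediate, whereas the naive test of checking, for every edge $uw$, whether $N[u] \cup N[w] = V(G)$ is essentially as costly as triangle counting and far from linear. The crux is therefore the linear-time traversal of the possibly dense complement $\co{G}$; once that tool is in place, computing co-components, scanning the edges, and extracting the witness are all routine bookkeeping.
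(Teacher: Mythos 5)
Your proposal is correct, and it takes a genuinely different route from the paper's proof. Both arguments rest on the same characterization---$G$ has an induced~$\co{P_3}$ if and only if $\co{G}$ has an induced $P_3$, i.\,e.\ $\co{G}$ is not a disjoint union of cliques---but they cope with the potential density of $\co{G}$ in different ways. You handle it uniformly via the standard linear-time complement-traversal trick: a BFS in $\co{G}$ that keeps unvisited vertices in a doubly linked list, charging skipped list entries to $N(x)$ and deletions to the deleted vertex, so co-components and a depth-two BFS witness are obtained in $O(n+m)$ time without ever materializing $\co{G}$; your accounting and the extraction of the induced $P_3$ from the first two BFS levels are both sound. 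The paper instead avoids any complement-traversal machinery by a degree-threshold case analysis: vertices of degree less than $n/2-1$ either span an edge (which immediately yields a witness, since the closed neighborhood of such an edge misses some vertex) or form an independent set; after one further case the problem reduces to finding a $P_3$ in $\co{G[V_2]}$, where $V_2$ contains only vertices of degree at least $n/2-1$, and this induced subgraph is dense enough that its complement has $O(n+m)$ edges and can be built explicitly. What your approach buys is uniformity and reusability---no case distinctions, and the complement-BFS subroutine is a known general-purpose tool (it computes co-connected components in linear time); what the paper's approach buys is elementary self-containedness, requiring nothing beyond degree counting and an ordinary BFS on an explicitly constructed, provably sparse complement.
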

\begin{proof}
	Assume that the input graph has at least one edge; otherwise there is no~$\co{P_3}$.
	Further assume that there is no isolated vertex; otherwise the isolated vertex and any edge forms a~$\co{P_3}$.
	Clearly, this assumptions can be checked in linear time.
	
	Partition the vertex into two parts $V_1$ and $V_2$ where~$V_1$ is the set of vertices of degree less than $n/2 - 1$ and $V_2 = V \setminus V_1$. 
	One can check in~$O(m)$ time whether~$V_1$ is an independent set.
	If not, then~$u,v \in V_1$ with~$uv \in E$ is part of a~$\co{P_3}$: Since~$|N[\{ u, v \}] \le n-2$, there is a vertex~$w \in V \setminus N[\{u,v\}]$.
	Thus assume that~$V_1$ is an independent set.
	
	Next consider the case that there is a vertex~$v \in V_1$ with~$N(v) \ne V_2$.
	Thus, there is a vertex~$u \in V_2 \setminus N(v)$.
	Since~$\deg(u) \ge n/2-1$ and~$\deg(v) < n/2 -1$ it follows that there is a vertex~$w \in N(u) \setminus N(v)$.
	Thus~$v,u,w$ forms a~$\co{P_3}$.
	Note that~$v$ and~$u$ can be found (if one exists) in linear time.
	Once~$u$ and~$v$ are fixed, one can find~$w$ with another linear-time scan of the graph.
	
	It remains to consider the case that for each vertex~$v \in V_1$ we have~$N(v) = V_2$.
	Since~$V_1$ is an independent set, it follows that no vertex in~$V_1$ can be part in a~$\co{P_3}$.
	Thus, it suffices to look for a $\co{P_3}$'s in $G[V_2]$.
	Since~$\deg(v) \ge n/2$ for each~$v \in V_2$, it follows that we can compute~$\co{G[V_2]}$ in~$O(n+m)$ time.
	Moreover, we can find a~$P_3$ in~$\co{G[V_2]}$ in~$O(n+m)$ time\footnote{\label{note1}Note that each connected component that is not a clique contains a~$P_3$, which can be found with a simple BFS from a non-universal vertex.}.
\end{proof}

Note that all~$\co{P_3}$'s can be enumerated easily in~$O(nm)$ time: Enumerate all combinations of one edge and one vertex and check whether they induce a~$\co{P_3}$.
The following observation shows that this running time bound is tight, even for $1$-closed graphs.
\begin{observation}[folklore]
	\label{obs:n^k-subgraphs}
	Let~$k$ be a constant. 
	For every subgraph~$H$ on~$k$ vertices there is an $n$-vertex graph~$G$ containing~$\Theta(n^k)$ distinct occurrences of~$H$.
	Moreover, if~$H$ does not contain an induced~$P_3$, then~$G$ is $1$-closed.
\end{observation}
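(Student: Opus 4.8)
The plan is to give explicit constructions, splitting according to the two claims.

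For the first claim (arbitrary $H$) I would use a balanced blow-up of $H$. Fix a labeling $v_1, \dots, v_k$ of $V(H)$ and replace each $v_i$ by an independent set $V_i$ of $\lfloor n/k\rfloor$ vertices; make $V_i$ and $V_j$ completely adjacent if $v_iv_j \in E(H)$ and completely nonadjacent otherwise. Selecting one vertex $u_i$ from each $V_i$ yields a set $\{u_1,\dots,u_k\}$ whose induced subgraph is isomorphic to $H$, since $u_iu_j$ is an edge precisely when $v_iv_j\in E(H)$; distinct selections give distinct vertex sets because the $V_i$ are pairwise disjoint. For $n$ large enough this produces $\lfloor n/k\rfloor^k=\Omega(n^k)$ induced copies of $H$, and together with the trivial upper bound $\binom{n}{k}=O(n^k)$ this gives $\Theta(n^k)$.

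For the second claim I would first record the characterization that a graph is $1$-closed if and only if it is a disjoint union of cliques, which is exactly the class of $P_3$-free graphs: two nonadjacent vertices with a common neighbor induce a $P_3$, so $1$-closedness (no two nonadjacent vertices share a neighbor) forbids an induced $P_3$, while conversely vertices in distinct cliques share no neighbors. Thus if $H$ has no induced $P_3$ it is a disjoint union of cliques of sizes $s_1,\dots,s_t$ with $\sum_i s_i=k$. I then take $G$ to be the disjoint union of $t$ cliques $Q_1,\dots,Q_t$, each on $\lfloor n/t\rfloor$ vertices, which is $1$-closed by the characterization. Choosing, for each $i$, an $s_i$-element subset $A_i\subseteq Q_i$ yields $S=A_1\cup\dots\cup A_t$ with $G[S]$ a disjoint union of cliques of sizes $s_1,\dots,s_t$, i.e.\ $G[S]\cong H$; distinct tuples $(A_1,\dots,A_t)$ give distinct sets $S$ since $S\cap Q_i=A_i$. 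The number of such choices is $\prod_{i=1}^t\binom{\lfloor n/t\rfloor}{s_i}=\Theta(n^{\sum_i s_i})=\Theta(n^k)$ for constant $k$, again matched by the $O(n^k)$ upper bound.

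The only genuinely delicate point is the second claim: the balanced blow-up from the first part is \emph{not} $1$-closed (already the blow-up of a single edge is the complete bipartite graph $K_{n/2,n/2}$, whose nonadjacent vertices share many neighbors), so a different construction is needed, and the key observation is that the $P_3$-freeness of $H$ is exactly what lets us realize $H$ inside a disjoint union of cliques while keeping the count at $\Theta(n^k)$. Everything else reduces to the routine verifications that the constructed subsets induce $H$, that they are pairwise distinct, and that the binomial-coefficient product is $\Theta(n^k)$.
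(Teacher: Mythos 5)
Your proof is correct, but it diverges from the paper's in one instructive way. The paper uses a \emph{single} construction for both claims: each vertex of $H$ is replaced by a \emph{clique} of order $n/k$ (classes fully joined when the corresponding vertices of $H$ are adjacent, fully nonadjacent otherwise), and one vertex is picked per class. This clique blow-up yields $\Theta(n^k)$ induced copies exactly as your independent-set blow-up does, but it has the additional property that the blow-up of a cluster graph is again a cluster graph; so when $H$ is $P_3$-free the very same graph $G$ is a disjoint union of cliques, hence $1$-closed, and the second claim is immediate. You instead blow up by independent sets, which---as you correctly point out---destroys $1$-closure already for a single edge, and you therefore need a second, separate construction (a disjoint union of $t$ equal-size cliques with $s_i$ vertices chosen from the $i$-th) for the $P_3$-free case. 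That second construction is essentially the paper's clique blow-up specialized to cluster graphs, with the clique sizes equalized, and your counting via $\prod_{i=1}^{t}\binom{\lfloor n/t\rfloor}{s_i}=\Theta(n^k)$ is sound (ignoring other assignments of component sizes to cliques only undercounts, which is harmless for the lower bound). So both arguments are complete; the paper's choice of clique rather than independent-set blow-up is what lets it avoid your case split, while your version makes explicit an observation the paper leaves implicit, namely that the more standard blow-up by independent sets cannot serve for the $1$-closure claim.
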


\begin{proof}
	For a fixed~$H = (\{v_1, \dots, v_k\},E)$ on $k$ vertices, let~$G$ be the graph obtained by replacing each vertex with a clique of order $n/k$.
	Formally, let $V(G) = (\{v_i^a \mid i \in [k], a \in [n/k]\}$ and $E(G) = \{ v_i^a v_i^b \mid i \in [k], a, b \in [n/k] \} \cup \{ v_i^a v_j^b \mid v_iv_j \in E, a, b \in [n/k] \}$.
	Clearly, for each combination~$a_1, a_2, \ldots, a_k \in [n/k]$, the graph~$G[\{v_1^{a_1}, v_2^{a_2}, \ldots,v_k^{a_k}\}]$ is isomorphic to~$H$.
	This show the first part.
	
	As to the second part, observe that if~$H$ does not contain an induced~$P_3$, then~$H$ is a cluster graph.
	By construction, it follows that also~$G$ is a cluster graph thus is $1$-closed.
\end{proof}


We continue with~$P_3$'s. 
As a~$P_3$ can be found in linear time\cref{note1}, there is no need to consider $c$-closed graphs for the detection problem.
We thus turn to enumeration. 
First, observe that a start~$K_{1,n}$ contains~$\Theta(n^2)$ many~$P_3$'s and is $2$-closed.
Thus, the following upper bound on the number of~$P_3$'s is tight.

\begin{lemma}
  \label{lem:nump3}
  A $c$-closed graph~$G$ has~$O(cn^2)$ induced~$P_3$'s.
\end{lemma}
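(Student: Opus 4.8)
The plan is to count induced $P_3$'s by pivoting on their two non-adjacent endpoints rather than on their center. An induced $P_3$ is a path $u - v - w$ in which $v$ is adjacent to both $u$ and $w$ but $uw \notin E(G)$. Equivalently, it is determined by an unordered non-adjacent pair $\{u, w\}$ together with a common neighbor $v \in N(u) \cap N(w)$. This yields a clean correspondence: the number of induced $P_3$'s whose endpoints are $u$ and $w$ is exactly $|N(u) \cap N(w)|$, for every non-adjacent pair $\{u, w\}$.

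First I would formalize this correspondence, observing that in an induced $P_3$ the two degree-one vertices (the endpoints) form the unique non-adjacent pair and the center is a common neighbor of them. Hence the total number of induced $P_3$'s equals $\sum_{uw \notin E(G)} |N(u) \cap N(w)|$. The $c$-closure hypothesis now enters precisely at this sum: by definition of $c$-closedness, every non-adjacent pair $u, w$ satisfies $|N(u) \cap N(w)| < c$, so each summand is bounded by $c$. Since there are at most $\binom{n}{2}$ non-adjacent pairs, the entire sum is at most $c \binom{n}{2} = O(c n^2)$, which gives the claim.

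There is no serious obstacle in this argument; the only point requiring care is the choice of which two of the three vertices to sum over. Summing instead over centers $v$ (that is, counting non-edges inside each neighborhood $N(v)$) would not let the $c$-closure bound apply, since $c$-closure controls common neighborhoods of non-adjacent pairs, not non-edges within a single neighborhood. The key idea is therefore to organize the count around the unique non-adjacent endpoint pair of each $P_3$, after which the bound is an immediate consequence of the definition of $c$-closure.
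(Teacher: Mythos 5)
Your proof is correct and matches the paper's argument essentially verbatim: both partition the induced $P_3$'s by their unique non-adjacent endpoint pair, bound each class by $c$ via the $c$-closure definition, and multiply by the $\binom{n}{2}$ possible pairs. No differences worth noting.
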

\begin{proof}
  Let~$\mathcal{P}_3$ be the set of all~$P_3$'s in~$G$ and let~$\mathcal{P}_3^{uv}$ be the set of all~$P_3$'s with endpoints~$u$ and~$v$.
  By definition, $\mathcal{P}_3 = \bigcup_{uv \notin E(G)} \mathcal{P}_3^{uv}$.
  Since~$G$ is $c$-closed, $|\mathcal{P}_3^{uv}| < c$ for each~$uv \notin E(G)$.
  Thus, we obtain~$|\mathcal{P}_3| = \sum_{uv \notin E(G)} |\mathcal{P}_3^{uv}| < c \binom{n}{2} = O(cn^2)$.
\end{proof}

Ho{\`{a}}ng et al.~\cite{HKSS13} showed that \Cref{algo:p3s} runs in~$O(m + (\# P_3) + (\# K_3))$ time, where~$(\# P_3)$ and~$(\# K_3)$ are the number of~$P_3$'s and~$K_3$'s, respectively:
\begin{algorithm}[t]
  \caption{An algorithm for enumerating~$P_3$'s.}
  \label{algo:p3s}
  \begin{algorithmic}[1]
    \Function{EnumerateP3s}{$G$}
      \ForAll{$uv \in E(G)$} \label{algo:p3:edge}
         \ForAll{$w \in N(u) \cup N(v)$} \label{algo:p3:neighbor}
           \State \algorithmicif\ $w \notin N(u) \cap N(w)$ \algorithmicthen\ \textbf{output} a $P_3 = (u, v, w)$. \label{algo:p3:check} 
         \EndFor
      \EndFor
    \EndFunction
  \end{algorithmic}
\end{algorithm}
The algorithm considers each edge~$uv$ and each vertex~$w$ incident with~$uv$.
Thus, $u$, $v$, and~$w$ form either a~$P_3$ or a triangle.
Since~$(\# K_3) \in O(m^{\nicefrac{3}{2}})$ \cite{IR78}, we obtain the following theorem:

\begin{theorem}
  \label{thm:enump3}
  There is an~$O(cn^2 + m^{\nicefrac{3}{2}})$-time algorithm to enumerate all~$P_3$'s.
\end{theorem}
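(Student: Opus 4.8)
The plan is to reuse the output-sensitive enumeration procedure already at hand rather than to design a new algorithm from scratch. Concretely, I would run \Cref{algo:p3s} of Ho{\`a}ng et al.~\cite{HKSS13} on the input graph~$G$; by their analysis it lists all induced~$P_3$'s in time~$O(m + (\# P_3) + (\# K_3))$, where~$(\# P_3)$ and~$(\# K_3)$ denote the number of induced~$P_3$'s and of triangles in~$G$. The entire argument then reduces to substituting worst-case upper bounds for these two output-size terms, which is exactly where the $c$-closure enters.

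For the first term I would invoke \Cref{lem:nump3}: since~$G$ is $c$-closed, it contains~$(\# P_3) = O(cn^2)$ induced~$P_3$'s. For the second term, the classical triangle bound of Itai and Rodeh~\cite{IR78} gives~$(\# K_3) = O(m^{\nicefrac{3}{2}})$, which holds on arbitrary graphs and hence in particular on $c$-closed ones. Plugging both into the running time yields~$O(m + cn^2 + m^{\nicefrac{3}{2}})$, and since~$m \le m^{\nicefrac{3}{2}}$ (the case~$m = 0$ being trivial) the linear term is absorbed, leaving the claimed bound~$O(cn^2 + m^{\nicefrac{3}{2}})$.

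Because both quantitative ingredients are already established, I do not expect a genuine obstacle here; the remaining care is purely bookkeeping. I would only confirm that \Cref{algo:p3s} reports every induced~$P_3$ at least once (so that we truly enumerate all of them) and only a bounded number of times (so that the~$(\# P_3)$ term is not inflated beyond a constant factor): each~$P_3$ with center~$b$ and endpoints~$a, c$ is detected exactly when the outer loop processes one of its two edges~$ab$ or~$bc$ together with the opposite endpoint, hence it is reported a constant number of times. This is what licenses treating the output-sensitive running time as a worst-case bound once the~$O(cn^2)$ count coming from $c$-closure has been inserted, and it is the only point where I would spell out a short verification rather than merely citing a previous result.
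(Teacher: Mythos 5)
Your proposal is correct and matches the paper's own argument: the paper likewise runs \Cref{algo:p3s}, cites its $O(m + (\#P_3) + (\#K_3))$ running time from \cite{HKSS13}, bounds $(\#P_3) = O(cn^2)$ via \Cref{lem:nump3} and $(\#K_3) = O(m^{\nicefrac{3}{2}})$ via \cite{IR78}. Your extra check that each $P_3$ is reported a constant number of times is a harmless addition, not a deviation.
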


\citet{FRSWW20} showed that a set of cliques containing all maximal cliques can be enumerated in $O(p_3(n, c) + 3^{\nicefrac{c}{3}} n^2)$ time, where $p_3(n, c)$ is the time complexity to list all induced $P_3$'s.
They noted that $p_3(n, c) = O(c n^{2 + o(1)} + c^{3 - \omega - \alpha} n^\omega + n^\omega \log n)$ due to the result of \citet{GKL09}, where $\omega$ and $\alpha$ are the the matrix multiplication exponent and the dual exponent of matrix multiplication, respectively. 
Using~\Cref{thm:enump3} to bound~$p_3(n, c)$ gives the following:

\begin{corollary}
  There is an $O(m^{\nicefrac{3}{2}} + 3^{\nicefrac{c}{3}} n^2)$-time algorithm to find a set of cliques containing all maximal cliques.
\end{corollary}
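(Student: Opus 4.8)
The plan is to simply feed our faster $P_3$-enumeration bound into the algorithm of \citet{FRSWW20}. First I would invoke their result, which lists a set of cliques containing all maximal cliques in $O(p_3(n,c) + 3^{\nicefrac{c}{3}} n^2)$ time, where $p_3(n,c)$ is the cost of enumerating all induced $P_3$'s. By \Cref{thm:enump3} we may take $p_3(n,c) = O(cn^2 + m^{\nicefrac{3}{2}})$, so substituting gives an overall running time of $O(cn^2 + m^{\nicefrac{3}{2}} + 3^{\nicefrac{c}{3}} n^2)$.

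The remaining step is to collapse this to the claimed bound by absorbing the $cn^2$ term. The key estimate is the elementary inequality $c \le 3^{\nicefrac{c}{3}}$ for every integer $c \ge 1$; equivalently $c^3 \le 3^c$, which one verifies by hand for $c \in \{1,2,3\}$ (with equality at $c = 3$) and then follows from the fact that the exponential $3^c$ outgrows the polynomial $c^3$. Hence $cn^2 = O(3^{\nicefrac{c}{3}} n^2)$, and the running time simplifies to the stated $O(m^{\nicefrac{3}{2}} + 3^{\nicefrac{c}{3}} n^2)$.

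I expect no genuine obstacle here: the argument is a single substitution followed by an elementary asymptotic comparison. The only point deserving a moment's attention is confirming that the linear factor $c$ is dominated by the exponential factor $3^{\nicefrac{c}{3}}$, so that it can be hidden inside the big-$O$ of the already-present $3^{\nicefrac{c}{3}} n^2$ term.
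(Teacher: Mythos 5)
Your proposal is correct and follows exactly the paper's route: plug the bound $p_3(n,c) = O(cn^2 + m^{\nicefrac{3}{2}})$ from \Cref{thm:enump3} into the $O(p_3(n,c) + 3^{\nicefrac{c}{3}} n^2)$ algorithm of \citet{FRSWW20}. The only difference is that you spell out the absorption step $c \le 3^{\nicefrac{c}{3}}$ (equivalently $c^3 \le 3^c$), which the paper leaves implicit; that verification is correct and harmless.
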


We can adapt \Cref{algo:p3s} to an algorithm for finding a triangle:
As mentioned above, we find either a triangle or~$P_3$ in Line~\ref{algo:p3:check} of \Cref{algo:p3s}.
In order to find a triangle in~$O(cn^2)$ time, we just terminate the algorithm as soon as one is detected.

\begin{theorem}
  \label{thm:findk3}
  There is an~$O(cn^2)$-time algorithm to find a triangle.
\end{theorem}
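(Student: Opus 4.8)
The plan is to reuse the machinery behind \Cref{algo:p3s} but to stop the instant a triangle appears. Recall that \Cref{algo:p3s} scans every edge $uv$ and every vertex $w \in N(u) \cup N(v)$; each such triple $(u,v,w)$ with $w \notin \{u,v\}$ induces either a $P_3$ (when $w$ is adjacent to exactly one of $u,v$) or a triangle (when $w \in N(u) \cap N(v)$). I would run exactly this double loop, but in Line~\ref{algo:p3:check} react to the complementary case: as soon as some $w \in N(u) \cap N(v)$ is encountered, report the triangle $\{u,v,w\}$ and halt.

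For correctness, observe that if $G$ contains a triangle $\{a,b,c\}$ then, once the outer loop reaches the edge $ab$, the vertex $c$ lies in $N(a) \cap N(b)$ and is enumerated in the inner loop, so the algorithm halts with a valid triangle. Conversely, the algorithm halts only upon finding a $w$ adjacent to both endpoints of the current edge, which is by definition a triangle; hence it never reports a false positive.

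For the running time, I would precompute an adjacency oracle (e.g.\ an adjacency matrix, built in $O(n^2)$ time and space) so that each test ``$w \in N(u) \cap N(v)$'' costs $O(1)$. The heart of the argument is bounding the number of inner-loop iterations performed before termination. Every iteration with $w \notin \{u,v\}$ that does not trigger termination exhibits a genuine induced $P_3$, and because we abort at the very first triangle, all but the final iteration are of this $P_3$-type. Since each $P_3$ is witnessed at most twice (once through each of its two edges), the number of such iterations is at most $2 \cdot (\# P_3)$, which is $O(cn^2)$ by \Cref{lem:nump3}. Adding the $O(n^2)$ setup, the $O(m)$ overhead of the edge loop (including the at most two degenerate $w \in \{u,v\}$ per edge), and the single iteration on which we finally halt, the total is $O(m + n^2 + cn^2) = O(cn^2)$, using $m \le \binom{n}{2}$ and $c \ge 1$.

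The one place that genuinely requires care, rather than being routine, is exactly this last accounting. The full enumeration algorithm of Ho\`{a}ng et al.\ costs $O(m + \# P_3 + \# K_3)$, and $\# K_3$ can be as large as $\Theta(m^{\nicefrac{3}{2}}) = \Theta(n^3)$ for dense graphs, far exceeding the target $O(cn^2)$ for constant $c$. Early termination is therefore not a mere optimization but the crux: by never enumerating a second triangle we replace the uncontrolled term $\# K_3$ by the constant $1$, leaving only the $P_3$-count, which the $c$-closure bounds by $O(cn^2)$.
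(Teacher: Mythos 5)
Your proposal is correct and takes essentially the same route as the paper: the paper likewise adapts \Cref{algo:p3s} by terminating the moment a triangle is detected in Line~\ref{algo:p3:check}, with the $O(cn^2)$ bound following from the $P_3$-count bound of \Cref{lem:nump3}. Your explicit accounting (each non-terminating iteration witnesses a $P_3$, each $P_3$ is charged at most twice, plus $O(m+n^2)$ overhead) just spells out the bookkeeping the paper leaves implicit.
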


\begin{corollary}
  \label{thm:findkk}
  There is an $O(cn^{k - 1})$-time algorithm to find a clique $K_k$ of size $k$.
\end{corollary}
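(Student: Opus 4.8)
The plan is to prove the statement by induction on $k$, taking \Cref{thm:findk3} as the base case $k = 3$, where detecting a triangle costs $O(cn^2) = O(cn^{3-1})$ time (the cases $k \le 2$ being trivial). The reduction underlying the inductive step is the observation that $G$ contains a $K_k$ if and only if some vertex $v$ has a $K_{k-1}$ inside its neighborhood $G[N(v)]$: a $K_{k-1}$ in $N(v)$ extends by $v$ to a $K_k$, and conversely every $K_k$ restricts, for each of its vertices $x$, to a $K_{k-1}$ among the other $k-1$ vertices, all of which lie in $N(x)$.

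To make the recursion run within the budget, I would first record that taking induced subgraphs cannot increase the $c$-closure: for any $S \subseteq V(G)$ and nonadjacent $u, v \in S$ we have $N(u) \cap N(v) \cap S \subseteq N(u) \cap N(v)$, so $G[S]$ is again $c$-closed. In particular every $G[N(v)]$ is $c$-closed, and so the inductive $K_{k-1}$-detector applies to it with the same parameter $c$. Processing each vertex $v$ by running that detector on the $\deg(v)$-vertex graph $G[N(v)]$ then costs $O(c \cdot \deg(v)^{k-2})$ time by the induction hypothesis, and we stop as soon as one neighborhood yields a $K_{k-1}$.

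The only quantity to bound is the total time $\sum_{v \in V(G)} O(c \cdot \deg(v)^{k-2})$. Using $\deg(v) < n$ for every $v$ together with there being $n$ vertices, I would estimate $\sum_v \deg(v)^{k-2} \le n \cdot n^{k-2} = n^{k-1}$, which yields the claimed $O(cn^{k-1})$ running time. I do not anticipate a genuine obstacle; the one point that must not be overlooked is precisely that the neighborhood subgraphs inherit $c$-closedness, since otherwise the parameter could blow up along the recursion and the clean $O(cn^{k-1})$ bound would fail.
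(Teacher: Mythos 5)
Your proof is correct and takes essentially the same route as the paper: the paper likewise reduces everything to the $O(cn^2)$ triangle detector of \Cref{thm:findk3}, iterating over all $(k-3)$-vertex subsets $S$ and searching for a triangle in $\bigcap_{v \in S} N(v)$, which is exactly what your induction unrolls to, with the same (implicit in the paper) reliance on induced subgraphs inheriting $c$-closedness. The only difference is presentational---you peel off one vertex at a time recursively, whereas the paper enumerates the $(k-3)$-subsets in one flat loop.
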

\begin{proof}
  For each subset $S$ of $k - 3$ vertices, we check whether there is a triangle in $\bigcap_{v \in S} N(v)$ in $O(cn^2)$ time by \Cref{thm:findk3}.
\end{proof}

Next, we develop a more efficient algorithm for finding a triangle in sparse graphs.

\begin{theorem}
  \label{thm:findk3-2}
  There is an~$O(c^{\nicefrac{1}{3}} m^{\nicefrac{4}{3}})$-time algorithm to find a triangle.
\end{theorem}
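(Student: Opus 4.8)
The plan is to combine the $O(cn^2)$-time algorithm of \Cref{thm:findk3} with a degree threshold, trading off between a ``dense'' part handled via $c$-closure and a ``sparse'' part handled by direct enumeration. Fix a threshold $D$ (to be optimized at the end) and call a vertex \emph{light} if its degree is at most $D$ and \emph{heavy} otherwise. Every triangle either contains at least one light vertex or consists entirely of heavy vertices, so it suffices to detect each of these two types separately.

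For the heavy triangles, I would restrict to the subgraph $H := G[\{v : \deg(v) > D\}]$ induced by the heavy vertices. Since $\sum_{v} \deg(v) = 2m$, there are fewer than $2m/D$ heavy vertices, so $H$ has $n_H < 2m/D$ vertices. Crucially, any induced subgraph of a $c$-closed graph is again $c$-closed (common neighbors in $H$ are common neighbors in $G$), so I can invoke \Cref{thm:findk3} on $H$ and find a heavy triangle, if one exists, in $O(c\, n_H^2) = O(c m^2 / D^2)$ time.

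For the triangles containing a light vertex, the goal is an $O(mD)$-time procedure. Here is the scheme I would use: process each vertex $u \in V(G)$ in turn, mark all of $N(u)$ in a boolean array, and then for each \emph{light} neighbor $v \in N(u)$ iterate over $N(v)$, reporting a triangle whenever some $w \in N(v)$ is marked (i.e.\ $w \in N(u)$). Unmarking $N(u)$ afterwards keeps the array clean. This detects every triangle with a light vertex: if $\{a,b,c\}$ is such a triangle with $a$ light, then when $u = b$ is processed, $a$ appears as a light neighbor and $c \in N(a)$ is marked. The running time is the main technical point: marking costs $\sum_u \deg(u) = O(m)$, while the inner loops cost $\sum_u \sum_{v \in N(u),\, v \text{ light}} \deg(v) \le D \sum_u \deg(u) = O(mD)$, since each light $v$ contributes $\deg(v) \le D$. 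Importantly this uses only array marking, so each adjacency test is $O(1)$ and the bound is deterministic.

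Combining the two phases gives a total running time of $O(mD + c m^2/D^2)$. Balancing the two terms by setting $D = (cm)^{\nicefrac{1}{3}}$ makes both equal to $c^{\nicefrac{1}{3}} m^{\nicefrac{4}{3}}$, yielding the claimed bound. The step I expect to be the crux is the light-triangle phase: getting the $O(mD)$ bound hinges on charging the inner-loop work to the light endpoint's degree (capped by $D$) rather than to the heavy endpoint, and on avoiding any super-constant adjacency data structure.
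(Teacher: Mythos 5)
Your proposal is correct and follows essentially the same approach as the paper: the same degree threshold $D = c^{\nicefrac{1}{3}} m^{\nicefrac{1}{3}}$, the heavy part handled by the $O(cn^2)$-time algorithm of \Cref{thm:findk3} on the $O(\nicefrac{m}{D})$ high-degree vertices, and the light part by $O(mD)$-time direct enumeration. The only difference is that you spell out details the paper leaves implicit (the marking scheme for light triangles and the fact that induced subgraphs of $c$-closed graphs remain $c$-closed), which is to your credit.
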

\begin{proof}
  Let~$D = c^{\nicefrac{1}{3}} m^{\nicefrac{1}{3}}$.
  Let~$V_1$ be the set of vertices with degree at least~$D$ and let~$V_2 = V(G) \setminus V_1$.
  Note that~$|V_1| \in O(\nicefrac{m}{D})$.
  If there is a triangle in~$G[V_1]$, then it can be found in~$O(c (\nicefrac{m}{D})^2) = O(c^{\nicefrac{1}{3}} m^{\nicefrac{4}{3}})$ time by \Cref{thm:findk3}.
  If there is a triangle containing at least one vertex of~$V_2$, then it can be found in~$O(m D) = O(c^{\nicefrac{1}{3}} m^{\nicefrac{4}{3}})$ time.
\end{proof}

As to enumerating triangles, it follows from \Cref{obs:n^k-subgraphs} that the $O(m^{\nicefrac{3}{2}})$-time algorithm of \citet{IR78} cannot be improved even in $1$-closed graphs.

As a side-result, we also show that by enumerating all~$P_3$'s, one can compute the $c$-closure. 

\begin{theorem}
  There is an~$O(cn^2 + m^{\nicefrac{3}{2}})$-time algorithm to compute the $c$-closure. 
\end{theorem}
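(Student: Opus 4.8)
The plan is to observe that computing the $c$-closure reduces to finding, over all nonadjacent pairs, the maximum number of common neighbors, and that this quantity can be read directly off the $P_3$ enumeration of \Cref{thm:enump3}. By definition, $G$ is $c$-closed precisely when $|N(u) \cap N(v)| < c$ holds for every nonadjacent pair $u,v$, so the $c$-closure equals $1 + \max_{uv \notin E(G)} |N(u) \cap N(v)|$, where the maximum over an empty set (i.e.\ $G$ complete) is taken to be $0$, consistently yielding $c$-closure $1$. Hence it suffices to compute $\max_{uv \notin E(G)} |N(u) \cap N(v)|$.

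First I would recall the correspondence already used in \Cref{lem:nump3}: for a nonadjacent pair $u,v$, each common neighbor $w \in N(u) \cap N(v)$ yields exactly one induced $P_3$ with endpoints $u,v$ and center $w$, and conversely the center of every induced $P_3$ with endpoints $u,v$ is a common neighbor of $u$ and $v$. Thus $|N(u) \cap N(v)|$ equals the number of induced $P_3$'s whose (necessarily nonadjacent) endpoint pair is $\{u,v\}$. The algorithm therefore enumerates all induced $P_3$'s via \Cref{thm:enump3} while maintaining a counter $\mathrm{cnt}[\{u,v\}]$ indexed by vertex pairs: for each enumerated $P_3$ it identifies the endpoint pair in constant time (the center is the middle vertex, and the endpoints are the two nonadjacent vertices) and increments the corresponding counter. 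It then returns $1$ plus the appropriately scaled maximum counter value.

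For the running time, the enumeration costs $O(cn^2 + m^{\nicefrac{3}{2}})$ time by \Cref{thm:enump3}, and by \Cref{lem:nump3} there are only $O(cn^2)$ induced $P_3$'s, each processed in constant time using an $n \times n$ array for $\mathrm{cnt}$ (and, if needed, an adjacency matrix for constant-time endpoint identification). Initializing these arrays and scanning for the maximum costs $O(n^2)$, which is absorbed into $O(cn^2)$ since $c \ge 1$. The total is thus $O(cn^2 + m^{\nicefrac{3}{2}})$, as claimed.

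The only point requiring care is the multiplicity with which each $P_3$ is reported: \Cref{algo:p3s} lists a $P_3$ once for each of its two edges, so each induced $P_3$ is output exactly twice, both times with the same endpoint pair, giving $\mathrm{cnt}[\{u,v\}] = 2\,|N(u) \cap N(v)|$. The algorithm should therefore return $1 + \nicefrac{1}{2} \cdot \max_{\{u,v\}} \mathrm{cnt}[\{u,v\}]$, and more generally any constant reporting multiplicity is corrected by the same division. I expect this bookkeeping to be the sole subtlety; the degenerate complete-graph case needs no special handling, since counters initialized to $0$ make the empty maximum $0$ and hence return $1$. The core identity $|N(u) \cap N(v)| = |\{\text{induced } P_3\text{'s with endpoints } u,v\}|$ makes the reduction otherwise immediate.
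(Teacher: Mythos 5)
Your proposal is correct and follows essentially the same route as the paper: enumerate all induced $P_3$'s via \Cref{thm:enump3}, group them by their (nonadjacent) endpoint pair, and return one plus the maximum count, using \Cref{lem:nump3} to bound the bookkeeping cost. The only difference is implementation detail—you aggregate with an $n\times n$ counter array (and carefully correct for the factor-two reporting multiplicity of \Cref{algo:p3s}), whereas the paper groups the $P_3$'s by endpoints with a radix sort in $O(|\mathcal{P}_3|)$ time; both fit within the claimed $O(cn^2 + m^{\nicefrac{3}{2}})$ bound.
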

\begin{proof}
	We first enumerate all~$P_3$'s in~$O(cn^2 + m^{\nicefrac{3}{2}})$ time.
	Let~$\mathcal{P}_3$ be the set of all~$P_3$'s and let~$\mathcal{P}_3^{uv}$ be the set of all~$P_3$'s with endpoints~$u$ and~$v$.
	Once we obtain~$\mathcal{P}_3$ by \Cref{thm:enump3} in~$O(cn^2 + m^{\nicefrac{3}{2}})$ time, one can find all~$\mathcal{P}_3^{uv}$ in~$O(|\mathcal{P}_3|)$ time with a radix sort (recall that~$|\mathcal{P}_3| \in O(cn^2)$ by \Cref{lem:nump3}).
	Then the $c$-closure of~$G$ equals~$\max_{uv \notin E(G)} |\mathcal{P}_3^{uv}| + 1$.
\end{proof}

We remark that deciding whether a graph is 2-closed requires~$O(n m^{\nicefrac{2}{3}})$ time \cite{EHSS11}.

\section{Four-Vertex Induced Subgraphs} \label{sec:four-vertex}
In this section, we consider four-vertex subgraphs.
We turn our attention first to the enumeration aspect and then to the detection part.

\subsection{Enumeration}


\subparagraph*{Algorithms.}
Recall that if our four-vertex subgraph~$H$ does not contain an induced~$P_3$, then \Cref{obs:n^k-subgraphs} excludes algorithms with running time~$O(f(c) n^{4-\varepsilon})$ for any function~$f$ and any~$\varepsilon > 0$.
This applies to five of the eleven subgraphs: co-diamond, co-square, co-claw, and $K_4$.
Interestingly, as we show below, we can have algorithms with running time~$O(cn^3)$ or better for the other six subgraphs, namely, co-paw, $P_4$, claw~$K_{1,3}$, paw, square~$C_4$, and diamond.
This is implied by the next simple but general theorem.
Before stating the theorem, we need some more notation: 
For a graph $H$, let $i_2(H)$ be the minimum size of a vertex set $S$ such that each vertex in $V(H) \setminus S$ has at least two nonadjacent neighbors in~$S$.
For instance, $i_2(C_4) = 2$, $i_2(K_{1, 3}) = 3$, and~$i_2(K_{4}) = i_2(\co{K_{4}}) = 4$.

\begin{theorem}
  \label{thm:enumh}
  Let~$H$ be a graph.
  There is an $O(c^{|V(H)| - i_2(H)} n^{i_2(H)} + cn^2 + m^{\nicefrac{3}{2}})$ time algorithm to enumerate induced copies of $H$.
\end{theorem}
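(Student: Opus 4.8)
The plan is to enumerate copies of $H$ by first guessing the images of a well-chosen ``small'' vertex set and then using $c$-closure to bound the number of ways to extend that partial embedding. Fix a set $S \subseteq V(H)$ of size $i_2(H)$ witnessing the definition, so that every vertex of $V(H) \setminus S$ has at least two nonadjacent neighbors inside $S$. The algorithm iterates over all ways to map $S$ into $V(G)$; there are $O(n^{i_2(H)})$ such choices, since $|S| = i_2(H)$ is a constant. For each fixed assignment of the $S$-vertices to distinct vertices of $G$, I would check that the induced subgraph on the chosen images agrees with $H[S]$ (constant time), and then count/output every way to extend this to a full induced copy of $H$.

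The key observation driving the running time is the following. Consider a vertex $w \in V(H) \setminus S$ and its image candidate in $G$. By the choice of $S$, $w$ has two neighbors $x, y \in S$ with $xy \notin E(H)$. Hence any valid image of $w$ must be a common neighbor of the (already fixed) images of $x$ and $y$, whose images are nonadjacent in $G$ because $H[S]$ was matched correctly. Since $G$ is $c$-closed, two nonadjacent vertices have fewer than $c$ common neighbors, so there are at most $c$ candidate images for $w$. Ranging over all $|V(H)| - i_2(H)$ vertices of $V(H) \setminus S$ independently, a fixed embedding of $S$ extends in at most $c^{|V(H)| - i_2(H)}$ ways; each candidate extension can be verified to induce a copy of $H$ in constant time. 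This yields the $O(c^{|V(H)| - i_2(H)} n^{i_2(H)})$ term.

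To actually realize the $c$-bound on candidate extensions, I need the common-neighbor lists to be available efficiently rather than recomputed from scratch. The plan is to precompute, for every nonadjacent pair $u, v$, the list $N(u) \cap N(v)$ of their common neighbors; by \Cref{lem:nump3} the total size of all these lists is the number of induced $P_3$'s, which is $O(cn^2)$, and by \Cref{thm:enump3} they can all be produced in $O(cn^2 + m^{\nicefrac{3}{2}})$ time. This precomputation is exactly the source of the additive $O(cn^2 + m^{\nicefrac{3}{2}})$ term, and it lets the inner extension step read off each vertex's candidate set in time proportional to its (at most $c$) entries.

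The main obstacle I anticipate is bookkeeping rather than a deep idea: one must ensure that the per-vertex candidate sets for the vertices of $V(H) \setminus S$ can be intersected or jointly verified without blowing up the $c^{|V(H)| - i_2(H)}$ bound, and that distinct copies of $H$ are neither double-counted nor missed when $H$ has a nontrivial automorphism group. The cleanest way to handle this is to fix, once and for all, an ordering of $V(H)$ and only enumerate embeddings consistent with that ordering (or divide the final count by $|\mathrm{Aut}(H)|$), and to branch over the candidate images of the $V(H) \setminus S$ vertices one at a time, pruning as soon as an induced-subgraph constraint is violated, so that the work charged to each completed copy is constant. Care is also needed to confirm that intermediate partial embeddings respect all adjacency and nonadjacency constraints of $H$, but since $|V(H)|$ is constant these checks contribute only constant factors.
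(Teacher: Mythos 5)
Your proposal is correct and matches the paper's proof: both precompute the common neighborhoods of nonadjacent pairs via $P_3$ enumeration (\Cref{algo:p3s}, giving the additive $O(cn^2 + m^{\nicefrac{3}{2}})$ term), then enumerate the $O(n^{i_2(H)})$ placements of a witnessing set $S$ and extend each placement in at most $c^{|V(H)| - i_2(H)}$ ways, since every remaining vertex of $H$ must map to a common neighbor of two nonadjacent already-placed vertices. In fact, your write-up spells out the extension step and the role of the precomputed lists more carefully than the paper's own (very terse) argument does.
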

\begin{proof}
  We first compute the set $N(u) \cap N(v)$ for each pair $u, v$ of nonadjacent vertices.
  We can do so by enumerating all $P_3$'s using \Cref{algo:p3s} in $O(c n^2 + m^{\nicefrac{3}{2}})$ time.
  We consider each choice $S$ of $|V(H)| - i_2(G)$ vertices such that each vertex in $V(H) \setminus S$ has at least two nonadjacent vertices.
  For $V(H) \setminus S$, there are $c^{|V(H)| - i_2(H)}$ choices.
\end{proof}

This algorithm can enumerate squares and diamonds in~$O(m^{\nicefrac{3}{2}} + c^2 n^2)$ time and co-paws, $P_4$'s, claws, and paws in~$O(cn^3)$ time.
Note that we can construct $O(1)$-close graphs containing~$\Theta(n^3)$ co-paws, claws, or paws respectively (see discussion in the second part of this subsection).
However, for~$P_4$, square, and diamond we do not have fitting lower bounds (in terms of~$c$ and~$n$).

For $P_4$'s, paws, and squares we found alternative bounds.
As we see in the second part of this subsection the running time of the following algorithm for~$P_4$'s and paws is tight.
\begin{observation}
	\label{obs:enumP4}
	There is an an $O(c n m)$-time algorithm to enumerate all induced~$P_4$'s and all paws.
\end{observation}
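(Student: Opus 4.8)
The plan is to enumerate both graphs with a single sweep by exploiting a common combinatorial skeleton. The key observation is that both an induced $P_4$ and an induced paw can be described as an edge $\{p,q\}$, a vertex $r$ that is non-adjacent to both $p$ and $q$, and a fourth vertex $s \in N(p) \cap N(r)$; the two graphs are then distinguished \emph{only} by whether $s$ is adjacent to $q$. Concretely, if $pq \in E$, $pr, qr \notin E$, and $s \in (N(p) \cap N(r)) \setminus \{q\}$, then $s \notin N(q)$ forces the induced $P_4$ given by $q - p - s - r$, while $s \in N(q)$ forces the paw consisting of the triangle $\{p,q,s\}$ with the pendant $r$ attached to $s$. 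The decisive point is that the anchoring pair $\{p,r\}$ is \emph{non-adjacent}, so by $c$-closure it has fewer than $c$ common neighbors; hence there are at most $c-1$ candidates for $s$ once $p$, $q$, and $r$ are fixed.

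I would first preprocess. Discarding isolated vertices (which lie in no $P_4$ or paw) ensures $n = O(m)$, so that $cn^2 = O(cnm)$ and $m^{\nicefrac{3}{2}} = O(cnm)$. Using \Cref{algo:p3s} I enumerate all induced $P_3$'s in $O(cn^2 + m^{\nicefrac{3}{2}})$ time by \Cref{thm:enump3}; bucketing them by their non-adjacent endpoint pair via radix sort (exactly as in the computation of the $c$-closure) produces, for every non-adjacent pair $\{u,v\}$, the list of its common neighbors, which has size less than $c$ since $G$ is $c$-closed. This costs $O(|\mathcal{P}_3|) = O(cn^2)$ time by \Cref{lem:nump3}. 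I also set up $O(1)$-time adjacency queries. All of this stays within the target budget.

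The main loop iterates over every (ordered) edge $\{p,q\}$ and every vertex $r \notin N[p] \cup N[q]$, and scans the at most $c-1$ precomputed common neighbors $s$ of $\{p,r\}$, emitting a $P_4$ or a paw according to the adjacency $qs \in E$ tested in $O(1)$ time. For correctness I would verify the short case analysis above (that each emitted quadruple induces exactly a $P_4$ or a paw), and conversely that every induced $P_4$ is produced through each of its two end-edges and every paw through the unique triangle edge whose endpoints are both non-adjacent to the pendant; thus each copy is produced $O(1)$ times. For the running time, looping $r$ over $V$ costs $O(n)$ per edge, while the inner scans sum to $\sum_{r \notin N[p]} |N(p) \cap N(r)| < cn$, giving $O(cn)$ per edge and $O(cnm)$ overall, preprocessing included. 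The main obstacle, and the heart of the argument, is the skeleton of the first paragraph: the obvious route of extending an arbitrary induced $P_3$ makes the fourth vertex of a paw a common neighbor of an \emph{adjacent} pair (a triangle edge), where $c$-closure gives no bound, so the proof hinges on anchoring the enumeration on the non-edge $\{p,r\}$ instead.
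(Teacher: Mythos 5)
Your proposal is correct and follows essentially the same route as the paper: fix an edge and a vertex, observe that the fourth vertex of the $P_4$ (a degree-two vertex) or of the paw (the degree-three vertex) is a common neighbor of a \emph{non-adjacent} pair among the fixed three, bound its choices by $c$-closure, and precompute these common neighborhoods via the $P_3$-enumeration of \Cref{algo:p3s}. Your write-up merely unifies the two cases into one sweep and is more careful about multiplicities and the preprocessing cost (discarding isolated vertices so that $cn^2 + m^{\nicefrac{3}{2}} = O(cnm)$), details the paper's terse proof leaves implicit.
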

\begin{proof}
	By considering all combinations of one edge and one vertex, one fixes three vertices in a~$P_4$ (a paw). 
	For $P_4$'s (paws) assume the non-fixed vertex is one of the two degree-two vertices (the degree-three vertex).
	It follows from the definition of $c$-closure that there are at most~$c-1$ choices for the fourth vertex.
  Note that these choices can be obtained using \Cref{algo:p3s} in $O(cn^2 + m^{\nicefrac{3}{2}})$ time.
	This results in an~$O(cnm)$-time algorithm.
\end{proof}

As we shall in the second part of this section, there are 3-closed graphs with $\Theta(m^{\nicefrac{4}{3}})$ induced copies of $C_4$.
Thus, the running time of the following algorithm could still be improved slightly.
\begin{theorem}
  \label{thm:enumc4}
  There is an an $O(c m^{\nicefrac{3}{2}})$-time algorithm to enumerate all induced squares.
\end{theorem}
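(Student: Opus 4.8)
The plan is to enumerate all induced squares (induced $C_4$'s) in $O(c m^{\nicefrac{3}{2}})$ time by exploiting the standard high-degree/low-degree vertex partition that underlies triangle enumeration, and by using $c$-closure to bound the number of ``completions'' of a partial square. First I would set a degree threshold $D = \sqrt{m}$ and partition $V(G)$ into the high-degree vertices $V_1$ (those with $\deg(v) \ge D$) and the low-degree vertices $V_2 = V(G) \setminus V_1$, noting that $|V_1| \in O(\nicefrac{m}{D}) = O(\sqrt{m})$. The key structural fact I would rely on is the $c$-closure bound: in an induced $C_4$ on vertices $a, b, x, y$ with edges $ab, bx, xy, ya$ (so $a, x$ and $b, y$ are the two nonadjacent pairs), the pair of nonadjacent vertices $a$ and $x$ have $b$ and $y$ as common neighbors, hence any fixed nonadjacent pair can be the ``diagonal'' of fewer than $\binom{c}{2} = O(c^2)$ squares; alternatively, once three vertices of a square are fixed, the fourth lies in a common neighborhood of a nonadjacent pair and so there are at most $c - 1$ choices for it, exactly as in \Cref{obs:enumP4}.

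The main case split I would carry out is on how many high-degree vertices the square contains. If at least one of the two nonadjacent ``diagonal'' pairs consists of two low-degree vertices, say $a, x \in V_2$, then I can find all squares on that diagonal cheaply: iterate over edges $ab$ and then over the neighbors of the low-degree endpoint to reach the opposite corner, paying $O(mD) = O(m^{\nicefrac{3}{2}})$ for locating the two diagonal vertices, and then using the $c$-closure bound to list the at most $c - 1$ choices completing each triple, giving $O(c m^{\nicefrac{3}{2}})$. The remaining case is that each of the two diagonals contains at least one high-degree vertex; here I would enumerate over pairs drawn from $V_1$, of which there are $O((\nicefrac{m}{D})^2) = O(m)$, treat such a pair as a candidate nonadjacent diagonal, and use the precomputed common-neighbor sets (obtainable via \Cref{algo:p3s} in $O(cn^2 + m^{\nicefrac{3}{2}})$ time, which is dominated by $O(cm^{\nicefrac{3}{2}})$ since $n^2 \le \nicefrac{\, \cdot\,}{}$ can be absorbed, though I must check this carefully) to read off the $O(c)$ common neighbors and thus the squares, for a total of $O(cm)$.

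The hard part will be organizing the case analysis so that every induced $C_4$ is counted in exactly one case and so that the common-neighbor precomputation cost genuinely fits inside $O(cm^{\nicefrac{3}{2}})$ rather than blowing up to $O(cn^2)$; in dense regimes $cn^2$ need not be $O(cm^{\nicefrac{3}{2}})$, so I would either recompute the needed common neighbors locally (restricted to the relevant high-degree pairs and low-degree scans) rather than invoking \Cref{algo:p3s} wholesale, or argue that the adjacency queries and neighbor intersections I actually perform are charged against edges and the $O(\sqrt{m})$ high-degree vertices. Concretely, for a fixed high-degree vertex $v \in V_1$ I would want to compute, for each vertex $w$, the quantity $|N(v) \cap N(w)|$ by scanning $N(v)$ and marking second neighbors, at cost $O(\sum_{v \in V_1} \deg(v) \cdot \text{(avg degree)})$, which I would bound by $O(m^{\nicefrac{3}{2}})$; the $c$-closure then caps the relevant intersection sizes, contributing the extra factor of $c$ only in the listing step. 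Verifying that these charging arguments close and that no square is missed when both diagonals are ``mixed'' (one high, one low endpoint each) is the delicate bookkeeping I expect to consume most of the proof.
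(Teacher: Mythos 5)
Your high-level strategy (a degree threshold plus the $c$-closure bound on completions) is the same as the paper's, but as written your proof has two genuine gaps, and the paper's proof avoids both by using a different case split and a different threshold. The first gap is the case analysis itself. You split by diagonals: either some diagonal is low--low, or every diagonal contains a high-degree vertex. Your handling of the second case---enumerating pairs drawn from $V_1$ as candidate diagonals---only finds squares that have a high--high diagonal, so it misses the ``mixed--mixed'' squares $(a,b,x,y)$ in which $a,b$ are high-degree, $x,y$ are low-degree, and both diagonals are mixed; you flag exactly this configuration as ``delicate bookkeeping'' but never resolve it. The paper's split is different and is what makes everything close: in any square, either two \emph{consecutive} vertices are low-degree, or (since otherwise the high-degree vertices form a vertex cover of the $C_4$) two \emph{opposite} vertices are high-degree. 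The consecutive-low case is then handled by pure brute force---iterate over low--low edges $uv$ and all pairs $(u',v') \in N(u) \times N(v)$---with no common-neighborhood listing needed at all.

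The second gap is the completion primitive. Both of your cases assume you can list, in $O(c)$ time, the common neighbors of an arbitrary nonadjacent pair. That primitive is not available within the budget: \Cref{algo:p3s} provides it only in $O(cn^2 + m^{\nicefrac{3}{2}})$ time, and $cn^2$ is \emph{not} $O(cm^{\nicefrac{3}{2}})$ on sparse graphs (the paper even lists computing all common neighborhoods in $O(cn^2)$ time as an open problem). Your proposed repair---a two-hop scan from each high-degree vertex, total cost $O(|V_1| \cdot m) = O(m^{\nicefrac{3}{2}})$---does cover every pair containing a high-degree vertex, but in your case 1 the pair $(b,y)$ to be completed can consist of two \emph{low}-degree vertices (all four vertices of the square low-degree, e.g.\ in a graph of maximum degree below $\sqrt{m}$), and then with your threshold $D=\sqrt{m}$ computing these intersections on the fly costs $O(mD \cdot D)=O(m^2)$. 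This is precisely why the paper sets $D = c^{\nicefrac{1}{2}} m^{\nicefrac{1}{4}}$: the brute-force consecutive-low case then costs $O(mD^2) = O(cm^{\nicefrac{3}{2}})$, the opposite-high case costs $O(c^2 (\nicefrac{m}{D})^2) = O(cm^{\nicefrac{3}{2}})$, and common neighborhoods are needed only for high--high pairs, where a restriction of \Cref{algo:p3s} to high-degree endpoints delivers them in $O(c(\nicefrac{m}{D})^2 + m^{\nicefrac{3}{2}})$ time. A further arithmetic slip: enumerating squares on a fixed diagonal requires choosing \emph{two} common neighbors, so your second case costs $O(c^2 m)$ rather than $O(cm)$; since $c$ can be as large as $\Theta(m)$ (e.g.\ in $K_{2,t}$), $O(c^2m)$ exceeds $O(cm^{\nicefrac{3}{2}})$ unless you replace it by an output-sensitive charging argument.
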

\begin{proof}
  Let $D = c^{\nicefrac{1}{2}} m^{\nicefrac{1}{4}}$.
  We call a vertex high-degree if its degree is at least $D$ and low-degree otherwise.
  We consider two cases based on which vertices of the square are high-degree.
  \begin{romanenumerate}
    \item
      Two consecutive vertices are low-degree:
      First, we consider each edge $uv$ where both endpoints $u$ and $v$ are of low-degree.
      Then, we consider each neighbors $u'$ and $v'$ of $u$ and $v$, respectively.
      We list the square $(u, v, v', u')$ if $uv', u'v, u'v' \notin E(G)$.
      This requires $O(D^2 m) = O(c m^{\nicefrac{3}{2}})$ time.
    \item
      Two opposite vertices are high-degree:
      We first enumerate all $P_3$'s where both endpoints are high-degree in $O(c(\nicefrac{m}{D})^2 + m^{\nicefrac{3}{2}})$ time.
      We achieve this by adapting \Cref{algo:p3s}:
      We consider each edge $uv$ where at least one endpoint is high-degree in Line~\ref{algo:p3:edge} instead.
      Without loss of generality, assume that $u$ is high-degree.
      Moreover, we consider each high-degree neighbor $w$ of $v$ in Line~\ref{algo:p3:neighbor} instead.
      Then, this algorithm spends $O(1)$ time for each triangle or $P_3$ whose endpoints are both high-degree.
      Since there are $O(m^{\nicefrac{3}{2}})$ triangles and $O(c (\nicefrac{m}{D})^2)$ $P_3$'s whose endpoints are both high-degree, this adaptation of \Cref{algo:p3s} takes $O(c(\nicefrac{m}{D})^2 + m^{\nicefrac{3}{2}})$ time.
      Thus, we have the set of common neighbors of each pair of nonadjacent high-degree vertices.
      Now we can enumerate all squares where two opposite vertices are high-degree in $O(c^2 (\nicefrac{m}{D})^2) = O(c m^{\nicefrac{3}{2}})$ time.
  \end{romanenumerate}
  Overall, all squares are listed in $O(c m^{\nicefrac{3}{2}})$ time.
\end{proof}

\subparagraph*{(Tight) Lower bounds.}
We now provide (almost) fitting lower bounds. 
Whenever possible, we replace factors of~$n$ by factors of~$m$ (mostly replacing~$n^2$ by~$m$).
This is done via the following simple observation.

\begin{observation}\label{obs:brute-force-matching}
	For a graph $H$ of constant size, there is an $O(n^{|V(H)| - 2\nu(H)} m^{\nu(H)})$-time algorithm to enumerate all induced copies of $H$, where $\nu(H)$ is the maximum matching size of $H$.
\end{observation}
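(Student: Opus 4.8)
The plan is to beat the trivial brute force — which iterates over all $\binom{n}{|V(H)|}$ vertex subsets and tests each in constant time — by guessing \emph{matched} pairs of vertices through edges rather than through two independent vertex choices. The key observation is that selecting two adjacent vertices of the sought copy costs only $m$ (one choice of an edge of $G$) instead of $n^2$, and a maximum matching of $H$ is exactly the combinatorial object that maximizes how many of the $|V(H)|$ vertices can be covered in this cheaper way.

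Concretely, I would first fix a maximum matching $M = \{\{a_1,b_1\}, \dots, \{a_\nu, b_\nu\}\}$ of $H$, where $\nu = \nu(H)$, and let $c_1, \dots, c_k$ be the $k = |V(H)| - 2\nu(H)$ vertices of $H$ left uncovered by $M$. The algorithm then iterates over all edge tuples $(e_1, \dots, e_\nu) \in E(G)^\nu$ and all vertex tuples $(w_1, \dots, w_k) \in V(G)^k$. For each matching edge $\{a_i, b_i\}$, the two endpoints of $e_i$ are tentatively assigned to $a_i$ and $b_i$ (trying both orientations), while each $c_j$ is assigned to $w_j$. This defines a candidate map from $V(H)$ into $V(G)$; since $|V(H)|$ is constant, I can check in $O(1)$ time whether its image induces a subgraph isomorphic to $H$ under this labeling, and output the copy if so.

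For correctness I would argue that every induced copy is produced. Given an isomorphism $\phi$ from $H$ onto an induced subgraph $G[S]$, each matching edge $\{a_i, b_i\}$ maps to an actual edge $\{\phi(a_i), \phi(b_i)\}$ of $G$, since edges of $H$ map to edges in an induced copy. Hence the edge tuple $(\{\phi(a_1),\phi(b_1)\}, \dots, \{\phi(a_\nu),\phi(b_\nu)\})$ together with the vertex tuple $(\phi(c_1), \dots, \phi(c_k))$ is among those enumerated (with the correct orientation), and the matching $O(1)$ check succeeds. The running time equals the number of iterations $O(m^\nu \cdot n^k) = O(n^{|V(H)| - 2\nu(H)} m^{\nu(H)})$ times the constant per-iteration cost, with the $2^\nu$ orientation factor absorbed into the constant since $H$ has constant size.

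The only subtlety — and it is minor — is that the same copy may be generated several times, once per automorphism of $H$ or per consistent orientation of the matching edges; as $H$ has constant size this multiplicity is $O(1)$, so duplicates can be suppressed within the stated bound by hashing on the vertex set $S$. I do not expect any genuine obstacle: the whole argument reduces to the charging insight that each matching edge replaces an $n^2$ factor by an $m$ factor, which is precisely what the exponent $\nu(H)$ on $m$ records.
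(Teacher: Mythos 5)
Your proposal is correct and follows essentially the same approach as the paper's proof: guess $\nu(H)$ edges of $G$ to cover the matched pairs of a maximum matching of $H$, guess the remaining $|V(H)| - 2\nu(H)$ vertices individually, and verify each candidate in constant time. The paper states this more tersely (it does not spell out the orientation and duplicate-suppression details you mention, which are indeed absorbed into constants since $H$ has constant size), but the underlying argument is identical.
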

\begin{proof}
  We consider each choice for the set $V'$ of $|V(H)| - 2 \nu(H)$ vertices and the set $E'$ of $\nu(H)$ edges.
  Note that there are $n^{|V(H)| - 2 \nu(H)} m^{\nu(H)}$ such choices.
  Since $H$ of constant size, whether $V' \cup V(E')$ forms an induced $H$ can be checked in constant time.
\end{proof}

\Cref{obs:n^k-subgraphs,obs:brute-force-matching} yield matching running time upper and lower bounds even in $1$-closed graphs for the task of enumerating $\co{K_4}$'s, co-diamonds, co-squares, co-claws, or~$K_4$'s.
The remaining six cases are discussed below (in the order they are listed in \Cref{tab:overview})


Start with a co-paw.
The upper bound~$O(n^2m)$ and~$O(nm^2)$ follow by simple brute force selecting~$i \in [2]$ edges and~$3-i$ vertices (as in \Cref{obs:brute-force-matching}).
As to the lower bound consider the disjoint union of an independent set and a star~$\co{K_{n/2}} + K_{1,n/2-1}$: It is~$2$-closed and has~$m = \Theta(n)$ edges and contains~$\Theta(cn^3) = \Theta(n^3)$ co-paws.


As to~$P_4$'s, observe that again the upper bound~$O(m^2)$ follows from \Cref{obs:brute-force-matching}.
As to the lower bound, consider the graph resulting from making the centers of two~$K_{1,n/2-1}$'s adjacent: it is $2$-closed and contains~$\Theta(n^2) = \Theta(m^2) = \Theta(c n m)$ many~$P_4$'s.
Note that this lower bound fits to the algorithm in \Cref{obs:enumP4} but leaves a gap to the~$O(cn^3)$-time algorithm following from \Cref{thm:enumh}.
Interestingly, we can improve the lower bound as stated in the next theorem, but also the the new lower bound does not match the $O(cn^3)$ upper bound.


\begin{theorem}
  \label{thm:p4c4lowerbounds}
	There is an infinite family of $3$-closed graphs containing $\Theta(n^{\nicefrac{5}{2}})$ $P_4$'s and $\Theta(n^2 + m^{\nicefrac{4}{3}})$ squares.
\end{theorem}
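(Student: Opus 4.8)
The plan is to exhibit a single explicit infinite family that realizes both bounds at once, namely the \emph{Euclidean distance graph} over a finite field. For an odd prime power $q$, let $G_q$ have vertex set $V = \mathrm{GF}(q)^2$ and join two points $u,v$ by an edge whenever $\|u-v\|^2 := (u_1-v_1)^2 + (u_2-v_2)^2 = 1$. This is a simple loopless graph (as $0 \neq 1$) on $n = q^2$ vertices, and it is vertex-transitive: the neighbourhood of $u$ is the ``unit circle'' $\{v : \|v-u\|^2 = 1\}$, a nondegenerate conic carrying exactly $q \mp 1 = \Theta(q)$ points. Hence $G_q$ is $\Theta(\sqrt n)$-regular with $m = \Theta(q^3) = \Theta(n^{3/2})$, so in particular $m^{4/3} = \Theta(n^2)$ and the claimed bound $\Theta(n^2 + m^{4/3})$ collapses to $\Theta(n^2)$.

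First I would verify $3$-closedness. Given distinct $u,v$, a common neighbour $y$ satisfies $\|y-u\|^2 = \|y-v\|^2 = 1$; subtracting the two equations cancels the term $\|y\|^2$ and leaves a single linear equation $y \cdot (u-v) = \tfrac12(\|u\|^2 - \|v\|^2)$ in $y$ (here oddness of $q$ guarantees the coefficient $u-v$ is not killed). Thus the common neighbours lie on a line $L$, and intersecting $L$ with the conic $\|y-u\|^2 = 1$ yields at most two points. So every pair of vertices, adjacent or not, has at most two common neighbours, whence $G_q$ is $3$-closed.

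The heart of the argument is the lower bound on induced squares, a finite-field point count. By translation invariance it suffices to decide, for how many ``directions'' $w = u - v \neq 0$ the radical axis $L_w = \{y : y\cdot w = \tfrac12\|w\|^2\}$ is a \emph{secant} of the unit circle, i.e.\ meets it in two points. The key observation is that $w \mapsto L_w$ is a bijection from the $q^2-1$ nonzero vectors onto the $q^2-1$ affine lines \emph{avoiding the origin}; since the conic carries $\Theta(q)$ points with no three collinear, it has $\binom{\Theta(q)}{2} = \Theta(q^2)$ secant lines, of which only $O(q)$ pass through the origin. Hence $\Theta(q^2)$ directions $w$ give a secant $L_w$, and multiplying by the $n$ base points yields $\Theta(n^2)$ pairs $\{u,v\}$ of codegree exactly two. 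For such a pair with common neighbours $y,y'$, the four vertices induce a $C_4$ exactly when $u \not\sim v$ and $y \not\sim y'$; the first fails for only the $O(q)$ secant directions with $\|w\|^2 = 1$, and the second ($\|y-y'\|^2 = 1$) is one further algebraic condition excising a lower-order set, so $\Omega(n^2)$ induced squares survive. The matching upper bound holds in \emph{every} $3$-closed graph: an induced $C_4$ is pinned down by either diagonal, a non-adjacent pair of codegree exactly two, and distinct squares yield distinct diagonals, giving at most $\tfrac12\binom n2 = O(n^2)$ squares. Together this is $\Theta(n^2) = \Theta(n^2 + m^{4/3})$. I expect this finite-field count -- isolating a positive fraction of secant directions and discarding those with adjacent common neighbours -- to be the main obstacle.

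Finally, for the $P_4$ count I would count paths on three edges. As $G_q$ is $d$-regular with $d = \Theta(\sqrt n)$, the number of three-edge walks $v_1 v_2 v_3 v_4$ on distinct vertices is $\sum_{v_2 v_3 \in E}(\deg v_2 - 1)(\deg v_3 - 1) = \Theta(m\, d^2) = \Theta(n^{3/2}\cdot n) = \Theta(n^{5/2})$. Such a walk is an induced $P_4$ unless it has a chord: a chord $v_1 v_3$ or $v_2 v_4$ forces a triangle and a chord $v_1 v_4$ closes a $C_4$. Since all codegrees are at most two there are $O(m) = O(n^{3/2})$ triangles and, as just shown, $O(n^2)$ squares, so the degenerate walks number $O(n^2) = o(n^{5/2})$. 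Subtracting them leaves $\Theta(n^{5/2})$ induced $P_4$'s, completing the proof. (If one wishes to witness $\Theta(m^{4/3})$ squares across the whole range $m \le n^{3/2}$, one may restrict $G_q$ to a suitable vertex subset; but at $m = \Theta(n^{3/2})$ the single family $G_q$ already certifies both stated bounds.)
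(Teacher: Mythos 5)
Your proposal is correct in substance, but it takes a genuinely different route from the paper. The paper takes a projective plane of order $p$ and doubles it: each point $u$ becomes an adjacent pair $u_1,u_2$, each line $v$ an adjacent pair $v_1,v_2$, and incident point--line pairs are completely joined; $3$-closure follows from ``two points determine a line,'' the $P_4$'s are point--line--point--line paths, and the squares are claimed to be $(u_1,v_1,u_1',v_2)$ where $v$ is the common line of $u,u'$. Your graph is instead the finite Euclidean unit-distance graph on $\mathrm{GF}(q)^2$, with $3$-closure coming from a line meeting a nondegenerate conic in at most two points, squares coming from secant lines, and $P_4$'s from a regularity/walk count. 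Notably, your route buys correctness on exactly the delicate point: in the paper's graph the two common neighbours of \emph{every} non-adjacent pair are the two copies $v_1,v_2$ of a single object, which are adjacent by construction, so the claimed $4$-cycles contain the chord $v_1v_2$ and are in fact diamonds---that graph contains no induced $C_4$ at all. Your construction sidesteps this because the two points where a secant meets the unit circle are generically non-adjacent, which is the step you rightly identify as the heart of the matter. What the paper's construction buys, for the parts where it is sound ($3$-closure and the $P_4$ count), is that it uses only incidence axioms and no field arithmetic.

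Two steps in your write-up need repair, though both are easily fixed. First, the claim that $w \mapsto L_w$ is a bijection from the nonzero vectors onto the lines avoiding the origin fails when $q \equiv 1 \pmod 4$: for isotropic $w$ (i.e., $\|w\|^2 = 0$, which exists exactly when $-1$ is a square) the line $L_w$ passes through the origin, all scalar multiples of such $w$ give the same line, and lines with isotropic normal vector are never hit. Either restrict to $q \equiv 3 \pmod 4$---still an infinite family, and there the map really is a bijection---or note that only $O(q)$ vectors and lines are exceptional, which does not disturb the $\Theta(q^2)$ count. Second, the assertion that the condition $\|y-y'\|^2 = 1$ ``excises a lower-order set'' needs an actual argument; a one-line one exists: if $y \sim y'$ then $y'$ is a common neighbour of $0$ and $y$, and every pair of vertices in your graph has at most two common neighbours, so each circle point is adjacent to at most two others, leaving only $O(q)$ adjacent pairs among the $\Theta(q^2)$ pairs of circle points. (Similarly, in the $P_4$ count, bounding the walks that contain a triangle requires the triangle count times a degree factor, $O(n^{\nicefrac{3}{2}}) \cdot O(\sqrt{n}) = O(n^2)$, not the triangle count alone; the conclusion is unchanged.) With these patches your proof is complete; the $O(n^2)$ upper bound via diagonals---each non-adjacent pair in a $3$-closed graph is the diagonal of at most one induced square---is correct and is a nice general observation that the paper does not state.
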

\begin{proof}
	Suppose that $n' = p^2 + p + 1$ for an integer~$p>1$ and consider a projective plane $P$ on $n'$ points and~$n'$ lines.
	It fulfills the following properties:
	\begin{enumerate}
	\item For any pair of points (lines), there is exactly one line incident with both points (points).
	\item Each point (line) is incident with exactly $p + 1$ lines (points).
	\end{enumerate}
	See e.g.\ \citet{AS15} for more on projective planes.

	Now consider the graph $G$ constructed as follows:
	We introduce vertices $u_1, u_2$ for each point $u$ of $P$ and vertices $v_1, v_2$ for each line $v$ of $P$.
	Then, we add an edge $u_1 u_2$ for each point $u$ and $v_1 v_2$ for each line $v$.
	We also add edges $u_1 v_1, u_1 v_2, u_2 v_1, u_2 v_2$ for each pair of point $u$ and line $v$ that are incident in $P$.
	The constructed graph $G$ is 3-closed:
	For two distinct points $u$ and $u'$ (lines $v$ and $v'$), the vertices $u_i$ and $u_j'$ ($v_i$ and $v_j'$) for $i, j \in [2]$ have exactly two common neighbors by the first property of projective planes.
	For a non-incident pair of a point $u$ and a line $v$, the vertices $u_i$ and $v_j$ for $i, j \in [2]$ have no common neighbor.
	Moreover, $G$ has $n = 4n'$ vertices and $m = 4pn' + 6n' \in \Theta(n^{\nicefrac{3}{2}})$ edges by the second property of projective planes.

  Finally, we count the number of $P_4$'s and squares.
  We begin with $P_4$'s.
  Let $u$ and $u'$ be distinct points.
  By the properties of projective planes, there is exactly one line $v$ on which both $u$ and $u'$ lie and there are exactly $p$ lines that are incident with $u$ and not with~$u'$.
  Let $v'$ be one of these $p$ lines.
  Observe that $(u_1, v_1, u_1', v_1')$ is a $P_4$ in $G$.
  Hence, $G$ has $\Theta(pn^2) = \Theta(n^{\nicefrac{5}{2}}) = \Theta(m^{\nicefrac{5}{3}})$ $P_4$'s.
  Next, we consider squares.
	For each pair of distinct points $u$ and $u'$, there exists an induced $C_4$ on $(u_1, v_1, u_1', v_2)$, where $v$ is the line incident with both $u$ and $u'$.
	Thus, there are $\Theta(n^2) = \Theta(m^{\nicefrac{4}{3}})$ squares in $G$.
\end{proof}

We remark that a construction similar to the above one was used to show a lower bound on the number of maximal cliques in 2-closed graphs by \citet{EHSS11}.

Continuing with claws, observe that the upper bound~$O(n^2m)$ follows from \Cref{obs:brute-force-matching}.
As to the lower bound, consider a star~$K_{1,n-1}$: it is $2$-closed and contains~$\Theta(c n^3) = \Theta(n^2m)$ claws.


As to paws, observe that again~$O(m^2)$ time follows from \Cref{obs:brute-force-matching}.
For the lower bound, consider a clique~$K_{n/2}$ where at one vertex of the clique there are $n/2$ degree-one vertices attached.
This results in a $2$-closed graph with~$\Theta(c n^3) = \Theta(n^3)$ paws.

Next, consider squares.
Again, $O(m^2)$ time follows from \Cref{obs:brute-force-matching}.
A lower bound is provided in \Cref{thm:p4c4lowerbounds}.
Note that it does not match the upper bound~$O(m^{\nicefrac{3}{2}} + c^2 n^2)$ that follows from \Cref{thm:enumh}.

Finally, consider diamonds. 
Observe that again $O(m^2)$ time follows from \Cref{obs:brute-force-matching}.
As for lower bounds, consider a graph obtained by making the two high-degree vertices in a~$K_{2,n-2}$ adjacent.
This graph is $3$-closed and has~$\Theta(c^2n^2) = \Theta(n^2) = \Theta(m^2)$ diamonds: combining the two high-degree vertices with any two independent-set vertices form a diamond.
Note that the algorithm following from \Cref{thm:enumh} has an additional~$O(m^{\nicefrac{3}{2}})$ term in its running time which means it not tight.


\subsection{Detection}

In this section, we provide efficient algorithms for five out of the eleven induced subgraph detection problems on $c$-closed graphs, namely for the subgraphs co-diamond, co-paw, co-square, paw, and square.
Note that a~$P_4$ can be found in linear time~\cite{CPS85}, thus there is no room for improvement.
For~$\co{K_4}$ a faster algorithm on $c$-closed graphs is known~\cite{KKS20} (see also first paragraph of \Cref{sec:three-vertex}).
Hence, for four subgraphs, namely claw, co-claw, diamond, and~$K_4$, the question for fast algorithms on $c$-closed graphs remain open.

Out of the five positive results, detecting a co-diamond and a co-square require new algorithms. 
For the remaining three subgraph, the results either directly from \Cref{thm:enumc4} (for square) or from known characterizations via induced three-vertex subgraphs and results from \Cref{sec:three-vertex} (for co-paw and paw).
We start by briefly discussing the latter (co-paw and paw).
Afterwards, we show the algorithms for detecting a co-diamond and a co-square.
Finally, we provide a algorithm for detecting a diamond in a gem-free $c$-closed graph.
Moreover, we highlight the issue that needs to be resolved in order to remove the gem-free assumption. 

\subparagraph*{Co-paw and paw.}
We use the characterization of \citet{Ola88}: A graph is a paw-free if and only if it is triangle-free or $\overline{P_3}$-free.
Thus, we immediately obtain the following corollary from \Cref{thm:findcop3} and \Cref{thm:findk3,thm:findk3-2}. \todo{@TK: Can we also to find the co-pow / paw?}

\begin{corollary}
  \label{cor:findpaw}
  There is an $O(\min \{ cn^2, c^{\nicefrac{1}{3}} m^{\nicefrac{4}{3}} \})$-time algorithm to detect an induced paw.
\end{corollary}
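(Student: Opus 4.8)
The plan is to exploit Olariu's characterization, already quoted right before the statement: a graph is paw-free if and only if it is triangle-free or $\overline{P_3}$-free. Equivalently, a graph contains an induced paw if and only if it contains \emph{both} a triangle and an induced $\overline{P_3}$. So detecting a paw reduces to running the triangle-detection routines and the $\overline{P_3}$-detection routine, and reporting a paw precisely when both succeed. The two ingredients are already in hand: \Cref{thm:findcop3} finds an induced $\overline{P_3}$ (or certifies none exists) in $O(n+m)$ time, while \Cref{thm:findk3} and \Cref{thm:findk3-2} find a triangle (or certify none) in $O(cn^2)$ and $O(c^{\nicefrac{1}{3}}m^{\nicefrac{4}{3}})$ time respectively.

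The key steps, in order, are as follows. First I would run the linear-time $\overline{P_3}$ detector. Second I would run whichever triangle detector gives the better bound, i.e.\ take the minimum of the two running times $O(cn^2)$ and $O(c^{\nicefrac{1}{3}}m^{\nicefrac{4}{3}})$; since both dominate the $O(n+m)$ spent on $\overline{P_3}$ detection, the total is $O(\min\{cn^2,\ c^{\nicefrac{1}{3}}m^{\nicefrac{4}{3}}\})$, matching the claimed bound. Third, I would report ``paw present'' exactly when both subroutines report success, and ``paw absent'' otherwise. Correctness is immediate from the iff characterization: the graph has a paw iff it is neither triangle-free nor $\overline{P_3}$-free, which is exactly the case where both detectors succeed.

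The only genuine subtlety — and the step I would flag as needing care — is that a bare existence test is not quite enough if the corollary is read as demanding an \emph{explicit} induced paw as a witness. The two subroutines return a triangle $T$ and a $\overline{P_3}$ (an edge $xy$ plus an isolated vertex $z$) living in possibly disjoint parts of $G$, and these four-to-six vertices need not overlap in a paw. I would therefore note that once a triangle and a $\overline{P_3}$ are both guaranteed to exist, an induced paw can be located in additional $O(n+m)$ time by a standard local search (e.g.\ scanning outward from a triangle edge, or invoking the structural argument underlying Olariu's theorem), so producing a witness does not increase the asymptotic cost. If the intended reading is mere detection, this step is vacuous and the corollary follows directly.

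I expect the main obstacle to be purely expository rather than mathematical: the result ``follows immediately'' from the characterization, so the real work is stating the reduction cleanly and being explicit about whether we output a witness or only a yes/no answer. No new algorithmic idea is required beyond combining \Cref{thm:findcop3} with \Cref{thm:findk3,thm:findk3-2}.
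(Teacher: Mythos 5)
Your proposal takes exactly the paper's route: the paper, too, obtains \Cref{cor:findpaw} ``immediately'' by combining Olariu's characterization with the $\co{P_3}$ detector of \Cref{thm:findcop3} and the triangle detectors of \Cref{thm:findk3,thm:findk3-2}, answering ``paw present'' precisely when both subroutines succeed. Since the corollary only claims detection, your witness-recovery discussion is indeed vacuous, as you note.

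There is, however, one genuine flaw, which you inherit from the paper's loose phrasing of the characterization: the equivalence ``$G$ contains an induced paw if and only if $G$ contains both a triangle and an induced $\co{P_3}$'' is false for disconnected graphs. Olariu's theorem is component-wise: a graph is paw-free if and only if every \emph{connected component} is triangle-free or $\co{P_3}$-free (i.e., complete multipartite). The disjoint union $K_3 \cup K_2$ (which is even $1$-closed) contains a triangle and an induced $\co{P_3}$ (an edge of the $K_3$ together with a vertex of the $K_2$), yet it has no paw, so the algorithm exactly as you describe it---run both detectors on all of $G$ and conjoin the answers---returns a wrong answer on this input. The repair is immediate and costs nothing: since a paw is connected, it lies inside a single component, so run both detectors separately on each connected component $G_i$ and report a paw if and only if some one component contains both a triangle and a $\co{P_3}$. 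The running time is preserved because $\sum_i c\,n_i^2 \le c n^2$ and $\sum_i c^{\nicefrac{1}{3}} m_i^{\nicefrac{4}{3}} \le c^{\nicefrac{1}{3}} m^{\nicefrac{4}{3}}$, and the $\co{P_3}$ detection totals $O(n+m)$. Your write-up (like the paper's) should state this component-wise application explicitly.
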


\begin{corollary}
  \label{cor:findcopaw}
  There is an $O(m + c^3)$-time algorithm to detect an induced co-paw.
\end{corollary}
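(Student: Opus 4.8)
The plan is to detect a co-paw by passing to the complement and invoking the characterization of \citet{Ola88} that the paper already uses for paws. Since the co-paw is exactly $\overline{\mathrm{paw}} = P_3 \cup K_1$, the graph $G$ contains an induced co-paw if and only if $\overline{G}$ contains an induced paw. A paw is connected, so any induced paw of $\overline{G}$ lives inside a single connected component of $\overline{G}$; these components are precisely the \emph{co-components} of $G$, i.e.\ the maximal vertex sets $H$ for which $\overline{G}[H]$ is connected (equivalently, the parts between which $G$ is a complete join). Applying Olariu's theorem to the connected graph $\overline{G}[H]$, it contains a paw iff it is neither triangle-free nor complete multipartite, that is, iff it contains both a triangle and an induced $\overline{P_3}$. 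Translating back through the complement, this says that a co-component $H$ yields a co-paw iff $G[H]$ contains both an independent triple $\overline{K_3}$ and an induced $P_3$. Hence $G$ has a co-paw iff some co-component $H$ simultaneously contains an induced $P_3$ and an independent set of size three.

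First I would compute the co-components of $G$ -- the connected components of $\overline{G}$ -- in $O(n+m)$ time by a standard complement-BFS (scanning $N(v)$ against a global list of yet-unvisited vertices). For each co-component $H$ I would test in $O(n_H + m_H)$ time whether $G[H]$ is a cluster graph (equivalently, $P_3$-free), summing to $O(n+m)$ overall; let $V'$ be the union of those co-components that are \emph{not} cluster graphs. It then remains only to decide whether some non-cluster co-component contains an independent triple. Because any independent set of $G$ lies entirely within one co-component (vertices in distinct co-components are adjacent), an independent triple occurs in a non-cluster co-component iff $G[V']$ contains one. I would therefore run the $\overline{K_k}$-detection algorithm of \citet{KKS20} \emph{once} on $G[V']$; as an induced subgraph of a $c$-closed graph is $c$-closed, this costs $O(m + c^3)$ time and settles the question, for a total of $O(m + c^3)$.

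The main obstacle to guard against is precisely this ``same-component'' requirement: it is tempting but wrong to declare a co-paw whenever $G$ has a $P_3$ and an independent triple \emph{somewhere}. For instance $K_{3,2}$ contains both yet is co-paw-free, because its induced $P_3$'s and its independent triple sit in different co-components. The accompanying pitfall in the running time is that one must \emph{not} invoke the $O(\cdot + c^3)$-time independent-set routine separately on each co-component: there can be $\Theta(n)$ of them (e.g.\ a join of many $P_3$'s, where $c = \Theta(n)$), which would far exceed the budget. Confining the independent-set search to a single call on $G[V']$, and using that independent sets never cross co-components, is exactly what keeps the total within the claimed $O(m + c^3)$.
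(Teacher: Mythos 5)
Your proposal is correct, and the comparison with the paper is striking: the paper's own proof takes exactly the shortcut you flag as a pitfall. The paper states Olariu's characterization without the per-component qualification (``a graph is paw-free if and only if it is triangle-free or $\overline{P_3}$-free''), complements it, and concludes that $G$ contains a co-paw if and only if $G$ contains both an induced $P_3$ and an induced $\overline{K_3}$ \emph{anywhere}; the algorithm is then one linear-time $P_3$ search plus one call to the $O(m+c^3)$-time independent-set routine of \citet{KKS20} on all of $G$. As your $K_{3,2}$ example shows, this global equivalence is false: $K_{3,2}$ contains both a $P_3$ and an independent triple, yet its complement $K_3 \cup K_2$ has no paw (each component is too small), so the paper's algorithm as justified would report a false positive. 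Olariu's theorem is genuinely component-wise---every \emph{connected component} must be triangle-free or complete multipartite---and your proof is the correct way to exploit it: decompose $G$ into co-components via complement-BFS in $O(n+m)$ time, discard the cluster ($P_3$-free) co-components, and make a \emph{single} call to the $\overline{K_3}$-routine on the union $V'$ of the rest. The two observations that drive this---a co-paw has a connected complement and hence lies inside one co-component, and an independent set of $G$ never crosses co-components---are precisely what is missing from the paper's justification. Your second caution is also substantive: since the $c^3$ term does not shrink with component size and there can be $\Theta(n)$ non-cluster co-components, per-component calls would blow the budget, so the single aggregated call is what keeps the total at $O(m+c^3)$. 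In short, you use the same two building blocks as the paper (Olariu's characterization and the $\overline{K_3}$-detection of \citet{KKS20}), but your co-component decomposition is not an optional refinement: it is needed for correctness, and the paper's proof of \cref{cor:findcopaw} (and, by the same token, its justification of \cref{cor:findpaw}) should be repaired along the lines you describe.
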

\begin{proof}
	Due to the result of \citet{Ola88}, it suffices to find a~$P_3$ and~$\co{K_3}$ in the input graph~$G$:
	\begin{align*}
		G \text{ is co-paw-free} \iff \co{G} \text{ is paw-free} & \overset{\text{\cite{Ola88}}}{\iff} \co{G} \text{ is } K_3\text{-free or } \co{P_3}\text{-free} \\ & \iff G \text{ is } \co{K_3}\text{-free or } P_3\text{-free.}
	\end{align*}
	An induced~$P_3$ can be found in $O(n+m)$ time and an independent set of order three can be found in~$O(m + c^3)$ time~\cite{KKS20}.
\end{proof}



\subparagraph*{Co-diamond.} 
We next present our algorithm detecting co-diamonds, which is based on the following structural statements.

\begin{lemma}
	\label{lemma:codiamond}
	If there is a maximal clique~$C$ of order at least~$2c$ in~$G$, then either~$V(G) \setminus C$ is a clique or~$G$ contains a co-diamond.
\end{lemma}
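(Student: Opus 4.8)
The plan is to prove the dichotomy directly: assuming $V(G) \setminus C$ is not a clique, I would exhibit a co-diamond. Recall that a co-diamond is a set of four vertices inducing exactly one edge. So my target is an edge $c'd'$ with both endpoints inside $C$, together with two vertices $a, b$ outside $C$, such that the only edge among the four is $c'd'$.

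First I would fix two nonadjacent vertices $a, b \in V(G) \setminus C$, which exist precisely because $V(G) \setminus C$ is assumed not to be a clique. The central step is to bound how many vertices of $C$ each of $a$ and $b$ can be adjacent to. Here I would use the maximality of $C$: since $a \notin C$ and $C$ is a maximal clique, $a$ must have a non-neighbor $u \in C$ (otherwise $C \cup \{a\}$ would be a larger clique). Every neighbor of $a$ inside $C \setminus \{u\}$ is then a common neighbor of the nonadjacent pair $a, u$, so $c$-closure forces $|N(a) \cap C| \le |N(a) \cap N(u)| < c$; symmetrically $|N(b) \cap C| < c$.

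Combining these bounds, the set of vertices of $C$ adjacent to $a$ or $b$ has size at most $2(c-1) = 2c - 2$, so the hypothesis $|C| \ge 2c$ leaves at least two vertices $c', d' \in C$ outside $N(a) \cup N(b)$. Since $C$ is a clique, $c'd' \in E(G)$, while $ab, ac', ad', bc', bd'$ are all non-edges by the choice of $a, b, c', d'$. Hence $\{a, b, c', d'\}$ induces exactly the single edge $c'd'$ and is therefore a co-diamond.

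I do not expect a serious obstacle: the whole argument is a short counting estimate powered by the definition of $c$-closure. The only delicate points are invoking maximality of $C$ so that each of $a$ and $b$ is guaranteed a non-neighbor in $C$ (without which the $c$-closure bound would not apply), and verifying that the bound $2c - 2$ on the adjacent vertices together with $|C| \ge 2c$ really leaves \emph{two} free vertices in $C$ — this is exactly where the threshold $2c$ is needed rather than anything smaller.
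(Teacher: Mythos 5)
Your proposal is correct and follows essentially the same route as the paper's own proof: fix a nonadjacent pair outside $C$, use maximality of $C$ together with $c$-closure to bound each vertex's neighborhood in $C$ by $c-1$, and conclude that at least two clique vertices avoid both, yielding the co-diamond. Your write-up even makes explicit the step the paper leaves implicit, namely that $N(a) \cap C \subseteq N(a) \cap N(u)$ for a non-neighbor $u \in C$, which is exactly why the $c$-closure bound applies.
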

\begin{proof}
	If~$V(G) \setminus C$ is a clique, then clearly the graph~$G$ is co-diamond-free.
	It remains to show that if~$V(G) \setminus C$ is not a clique, then~$G$ contains a diamond.
	To this end, let~$uv \notin E(G)$ for~$u, v \in V(G) \setminus C$.
	Since~$C$ is maximal, there exist vertices~$u', v' \in C$ such that~$uu', vv' \notin E(G)$.
	By the $c$-closure, we have that~$|N(u) \cap C| < c$ and that~$|N(v) \cap C| < c$.
	Therefore, $|C \setminus N[\{ u, v \}]| \ge 2$.
	For~$w, w' \in C \setminus N[\{ u, v \}]$, the four vertices~$(u, v, w, w')$ forms an induced co-diamond.
\end{proof}

In our algorithm, we will use the following statement, which is a small reformulation of \Cref{lemma:codiamond}.

\begin{lemma}
	\label{lem:codiamond2}
	Let~$G$ be a graph that cannot be partitioned into two cliques.
	If there is a clique~$C$ of order at least~$2c$ in~$G$, then $G$ contains a co-diamond.\todo{@TK: Can one find the co-diamond in linear time?!}
\end{lemma}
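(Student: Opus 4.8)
The plan is to reduce directly to \Cref{lemma:codiamond} by upgrading the given clique to a maximal one. First I would extend the clique~$C$ to a maximal clique~$C' \supseteq C$; this can be done greedily by repeatedly adding any vertex adjacent to all current members until no such vertex remains. Since~$C \subseteq C'$, the size hypothesis is preserved, so~$|C'| \ge |C| \ge 2c$.

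Next I would apply \Cref{lemma:codiamond} to the maximal clique~$C'$. It yields one of two outcomes: either~$V(G) \setminus C'$ is a clique, or~$G$ contains a co-diamond. In the second case we are immediately done. In the first case, observe that~$C'$ and~$V(G) \setminus C'$ are two disjoint cliques whose union is~$V(G)$; hence they constitute a partition of~$V(G)$ into two cliques. This contradicts the assumption that~$G$ cannot be partitioned into two cliques, so the first case cannot occur. Therefore~$G$ must contain a co-diamond.

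There is no serious obstacle here, as the statement is essentially a contrapositive repackaging of \Cref{lemma:codiamond}: the alternative ``$V(G) \setminus C$ is a clique'' has merely been absorbed into the hypothesis as ``$G$ cannot be partitioned into two cliques.'' The only point requiring a moment of care is that \Cref{lemma:codiamond} assumes a \emph{maximal} clique, whereas here we are only handed an arbitrary clique; transferring the order bound from~$C$ to its maximal extension~$C'$ resolves this, and the transfer is immediate since extending a clique only enlarges it.
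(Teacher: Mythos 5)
Your proof is correct and follows essentially the same route as the paper's: extend~$C$ to a maximal clique~$C'$ (which still has order at least~$2c$), note that the hypothesis forces~$V(G) \setminus C'$ not to be a clique, and then invoke \Cref{lemma:codiamond}. The only cosmetic difference is that you phrase the final step as a contradiction, while the paper argues directly; the content is identical.
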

\begin{proof}
	If~$C$ is maximal, then the statement directly follows from \Cref{lemma:codiamond}.
	Otherwise, let~$C'$ be a maximal clique containing~$C$.
	Clearly, $C'$ is of order at least~$2c$.
	Moreover, since~$V(G)$ cannot be partitioned into two cliques, it follows that~$V(G) \setminus C'$ is not a clique.
	Thus, the statement again follows from \Cref{lemma:codiamond}.
\end{proof}

\begin{theorem}
	\label{thm:codiamond}
	There is an~$O(m + c^2 n)$-time algorithm to detect an induced co-diamond.
\end{theorem}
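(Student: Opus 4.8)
The plan is to reduce co-diamond detection to a clean structural dichotomy and then control the running time with the $c$-closure. First I would test, in $O(n + m)$ time, whether $V(G)$ can be partitioned into two cliques; equivalently, whether the complement $\co{G}$ is bipartite. This can be decided without materializing $\co{G}$ by running a breadth-first $2$-coloring directly on the complement, maintaining a global list of still-uncolored vertices so that each complement-BFS step costs only $O(1 + (\text{number of } G\text{-edges scanned}))$, for a total of $O(n + m)$. If the test succeeds, I claim $G$ is co-diamond-free and output \textbf{no}: in a two-clique graph every independent set has at most two vertices, one per clique, so the two non-adjacent vertices $a, b$ of a putative co-diamond would lie in different cliques; but then the edge $cd$, whose endpoints must both be non-adjacent to $a$ and to $b$, cannot be placed in either clique, a contradiction.

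It remains to handle the case where $G$ is \emph{not} partitionable into two cliques. Here I would use the reformulation that $G$ contains a co-diamond if and only if some edge $cd$ has a common non-neighborhood $S_{cd} := V(G) \setminus N[\{c, d\}]$ that is not a clique (the two non-adjacent vertices of $S_{cd}$ together with $c$ and $d$ form the co-diamond). The $c$-closure enters through the following shortcut based on \Cref{lem:codiamond2}: if some edge $cd$ satisfies $|S_{cd}| \ge 2c$, then $G$ already contains a co-diamond, so I may output \textbf{yes} immediately. Indeed, if $S_{cd}$ is not a clique it directly yields a co-diamond, and if $S_{cd}$ is a clique of order at least $2c$ then, since $G$ is not two-clique-partitionable, \Cref{lem:codiamond2} guarantees a co-diamond. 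Consequently only the edges whose common non-neighborhood has fewer than $2c$ vertices require an explicit cliqueness test, and each such test costs $O(|S_{cd}|^2) = O(c^2)$.

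The main obstacle is to organize this search so that it runs in $O(m + c^2 n)$ rather than the naive $O(nm)$ (from computing each $S_{cd}$ in full) or $O(c^2 m)$ (from testing every edge), the latter being too slow on dense co-diamond-free graphs such as blow-ups of $C_5$. The key is to charge the work to vertices instead of edges: viewing the task through the equivalent condition that every non-neighborhood $G[\co{N}(v)]$, where $\co{N}(v)$ denotes the non-neighbors of $v$, must be complete multipartite (i.e.\ $\co{P_3}$-free), a co-diamond corresponds to an induced $\co{P_3}$ inside some $G[\co{N}(v)]$. I would process the vertices one at a time and, for each $v$, search for such a $\co{P_3}$ while using the $c$-closure to keep the relevant candidate sets of size $O(c)$ --- concretely, the common neighborhoods of non-adjacent pairs all have size less than $c$ --- so that the per-vertex cost is $O(c^2)$ and the large-common-non-neighborhood shortcut caps any remaining work. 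The delicate part will be the bookkeeping that simultaneously (i) detects when a common non-neighborhood first reaches size $2c$, triggering an immediate \textbf{yes}, and (ii) keeps the total cost of the small cases within the $O(m + c^2 n)$ budget; getting this amortization right, rather than the structural dichotomy itself, is where the real work lies.
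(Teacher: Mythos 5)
There is a genuine gap. Your structural scaffolding is sound and matches the paper's: the linear-time two-clique-partition test (your complement-BFS variant works; the paper instead observes that a positive answer forces $m \ge \binom{n/2}{2}$, so one may build $\co{G}$ explicitly), the reduction of co-diamond detection to ``some edge $cd$ whose common non-neighborhood $S_{cd}$ is not a clique,'' and the shortcut that $|S_{cd}| \ge 2c$ already forces a \textbf{yes} via \Cref{lem:codiamond2}. But the actual content of the theorem --- an algorithm that runs in $O(m + c^2 n)$ --- is exactly the part you defer (``getting this amortization right \ldots is where the real work lies'') and never supply. A proof that ends by naming its own missing step is not a proof of the running-time claim.

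Moreover, the mechanism you hint at cannot work as stated: you invoke the $c$-closure to keep ``candidate sets'' of size $O(c)$, but the $c$-closure bounds common \emph{neighborhoods} of nonadjacent pairs, whereas the sets you must search for a $\co{P_3}$ (equivalently, for a nonadjacent pair) are common \emph{non}-neighborhoods, which the $c$-closure does not bound at all. The paper closes this hole with a degree trichotomy, which is the idea you are missing. After a brute-force base case $n \le 6c$ (handled in $O(c^3)$ time by the $O(m^{\nicefrac{3}{2}})$-time algorithm), one shows: (i) any edge $uw$ with $\deg(u), \deg(w) \le 2c$ yields a co-diamond, since then $|V(G) \setminus N[\{u,w\}]| \ge n - 4c \ge 2c$; (ii) any vertex $v$ with $2c < \deg(v) \le n - 2c$ yields a co-diamond, because $V(G) \setminus N[v]$ has size $\ge 2c$, and if it is not a clique a nonadjacent pair $u, w$ in it extends to a co-diamond $(u, w, v, v')$ using a neighbor $v' \in N(v) \setminus N(\{u,w\})$, which is nonempty since $|N(v)| - |N(v) \cap N(u)| - |N(v) \cap N(w)| > 2c - 2(c-1) > 0$ by the $c$-closure. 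Consequently, in the surviving case every edge has an endpoint of degree $\ge n - 2c + 1$, so every potential co-diamond's edge contains such a vertex, and these are the only vertices that need work: each has a non-neighborhood of size $< 2c$, so the pairwise non-adjacency check costs $O(c^2)$ per vertex, i.e.\ $O(c^2 n)$ in total, and any nonadjacent pair found is again completed to a co-diamond via the $c$-closure. This reduction to high-degree vertices is what makes both your worries evaporate --- one never needs to track $|S_{cd}|$ edge by edge, and no amortization over edges is required --- and without it your plan realizes neither the shortcut detection nor the $O(m + c^2 n)$ bound.
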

\begin{proof}
	If~$n \le 6c$, then we can determine whether the input graph~$G$ has an induced co-diamond in~$O(c^3)$ time, using the~$O(m^{\nicefrac{3}{2}})$-time algorithm of~\citet{EG04}.
	So assume that~$n \ge 6c$.

	Then, we determine whether the vertex set~$V(G)$ can be partitioned into two cliques~$C_1$ and~$C_2$.
	If~$m < \binom{n/2}{2}$, then this is impossible (at least one clique needs to be of order~$n/2$).
	Thus, assume~$m \ge \binom{n/2}{2} \in \Theta(n^2)$.
	Hence, in~$O(m)$ time we can simply check whether the complement~$\co{G}$ of~$G$ is bipartite.
	Suppose that there are two cliques~$C_1$ and~$C_2$ such that~$C_1 \cup C_2 = V(G)$.
	Then, we can conclude that~$G$ has no induced co-diamond.
	Thus, we assume in the following that~$V(G)$ cannot be partitioned into two cliques (note that this allows us to invoke \Cref{lem:codiamond2}).

	We claim that if there is an edge~$uw$ such that~$\deg(u) \le 2c$ and~$\deg(w) \le 2c$, then~$G$ has an induced co-diamond.
	Since~$n \ge 6c$, we have~$|V(G) \setminus N[\{ u, w \}]| \ge n - 4c \ge 2c$.
	If~$V(G) \setminus N[\{ u, w \}]$ is a clique (which can be checked in~$O(m)$ time), then by \Cref{lem:codiamond2} there is a co-diamond in~$G$.
	Hence, assume there exist nonadjacent vertices~$v, v' \in V(G) \setminus N[\{ u, w \}]$.
	However, then~$(u, w, v, v')$ forms an induced co-diamond.
	Note that such an induced co-diamond can be found in~$O(m)$ time.

	Next, consider the case that there is a vertex~$v$ such that~$2c < \deg(v) \le n - 2c$.
	We claim that~$G$ has an induced co-diamond in this case.
	Note that~$|V(G) \setminus N[v]| \ge 2c$.
	Hence, if~$V(G) \setminus N[v]$ is a clique, then by \Cref{lem:codiamond2} there is a co-diamond in~$G$.
	Otherwise, there exist nonadjacent vertices~$u, w \in V(G) \setminus N[v]$.
	Moreover, there exists a vertex~$v' \in N(v)$ that is adjacent to neither~$u$ nor~$w$:
	The $c$-closure of~$G$ yields that~$|N(v) \setminus N(\{ u, w \})| \ge |N(v)| - |N(v) \cap N(u)| - |N(v) \cap N(w)| > 2c - 2(c - 1) > 0$.
	Thus, we find an induced co-diamond~$(u, w, v, v')$.
	Hence, we assume in the following that each vertex has a degree of at most~$2c$ or at least~$n-2c+1$.

	It remains to consider the case that each edge contains a vertex of degree at least~$n-2c+1$.
	We iterate over all these high-degree vertex; let~$v$ be such a vertex of degree at least~$n - 2c + 1$.
	To find an co-diamond where~$v$ is one of its degree-one vertices, we simply check whether there are nonadjacent pair~$u, w$ of vertices in~$V(G) \setminus N[v]$.
	Since~$|V(G) \setminus N[v]| < 2c$, we can find~$u$ and~$w$ (if existing) in~$O(c^2)$ time.
	If there is no such pair, then we can conclude that~$G$ has no induced co-diamond containing~$v$.
	Otherwise, there is an induced co-diamond~$(u, w, v, v')$ for~$v' \in N(v) \setminus N[\{ u, w \}]$.
	Note that~$N(v) \setminus N[\{ u, v \}] \ne \emptyset$ by the same argument above.
	Since we spend~$O(c^2)$ time for each vertex of degree at least~$n - 2c$, this step requires~$O(c^2 n)$ time.
\end{proof}

\subparagraph*{Co-square.}
We now consider co-squares. 
The next lemma plays an important role in our co-square detection algorithms.

\begin{lemma}
  \label{lemma:cosquare}
  Suppose that there are vertices $u, v \in V(G)$ such that $\deg(u) \ge c$, $\deg(v) \ge 2c - 1$, and $uv \notin E(G)$.
  Then, $G$ contains an induced co-square.
\end{lemma}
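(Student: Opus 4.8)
The plan is to produce an explicit induced $\co{C_4}$, i.e., a set of four vertices inducing exactly two disjoint edges (a $2K_2$, which is precisely the complement of the square). I would look for such a configuration of the shape $\{u, a, v, b\}$ whose only edges are $ua$ and $vb$. Since $u$ and $v$ are nonadjacent and $G$ is $c$-closed, we have $|N(u) \cap N(v)| \le c - 1$. I would therefore set $A = N(u) \setminus N(v)$ and $B = N(v) \setminus N(u)$, and use the degree bounds to estimate their sizes: from $\deg(u) \ge c$ we get $|A| \ge \deg(u) - |N(u) \cap N(v)| \ge 1$, and from $\deg(v) \ge 2c - 1$ we get $|B| \ge \deg(v) - |N(u) \cap N(v)| \ge c$. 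Note that for any choice of $a \in A$ and $b \in B$ the pair already satisfies $av \notin E(G)$, $ub \notin E(G)$, and $a \ne b$; together with the hypothesis $uv \notin E(G)$, three of the four required non-edges of $\{u, a, v, b\}$ are automatic. So it remains only to secure the last non-edge $ab$.

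The key step is thus to find $a \in A$ and $b \in B$ with $ab \notin E(G)$, and I would argue this by contradiction. Suppose every vertex of $A$ were adjacent to every vertex of $B$. Fix any $a \in A$ (which exists because $|A| \ge 1$); then $B \subseteq N(a)$. Since also $B \subseteq N(v)$ by definition, we obtain $B \subseteq N(a) \cap N(v)$, hence $|N(a) \cap N(v)| \ge |B| \ge c$. But $a \in A$ means $a \notin N(v)$, so $a$ and $v$ are nonadjacent, and the $c$-closure forces $|N(a) \cap N(v)| < c$, a contradiction. Consequently there exist $a \in A$ and $b \in B$ with $ab \notin E(G)$, and then $G[\{u, a, v, b\}]$ has exactly the edges $ua$ and $vb$, i.e., it is an induced co-square.

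The only subtlety — rather than a genuine obstacle — is to see that the asymmetric degree requirements are exactly what drives the counting. The stronger bound $\deg(v) \ge 2c - 1$ is needed so that $|B| \ge c$, which is precisely the threshold that lets us invoke the $c$-closure on the nonadjacent pair $\{a, v\}$ to derive the contradiction; the weaker bound $\deg(u) \ge c$ is needed only to guarantee that $A$ is nonempty so that a candidate $a$ exists at all. Once this matching of the two bounds to the two roles is recognized, the rest is routine.
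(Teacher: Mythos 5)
Your proof is correct and follows essentially the same route as the paper's: both pick a vertex $a$ (the paper's $w$) in $N(u)\setminus N(v)$ using $\deg(u)\ge c$ and the $c$-closure on the pair $u,v$, and then apply the $c$-closure a second time to the nonadjacent pair $a,v$ to find a vertex of $N(v)$ adjacent to neither $u$ nor $a$. Your contradiction phrasing ($B \not\subseteq N(a)$ since $|B|\ge c > |N(a)\cap N(v)|$) is just the contrapositive of the paper's direct count $|N(v)\setminus N(\{u,w\})| \ge \deg(v) - |N(u)\cap N(v)| - |N(v)\cap N(w)| > 0$.
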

\begin{proof}
  Since $G$ is $c$-closed, we have $|N(u) \cap N(v)| < c$.
  It follows that $|N(u) \setminus N(v)| = |N(u)| - |N(u) \cap N(v)| > 0$.
  Let $w$ be an arbitrary vertex in $N(u) \setminus N(v)$.
  Since $vw \notin E(G)$, we have $|N(v) \cap N(w)| < c$ by the $c$-closure of $G$.
  Thus, $|N(v) \setminus N(\{ u, w \})| \ge |N(v)| - |N(u) \cap N(v)| - |N(v) \cap N(w)| > 0$.
  For an arbitrary vertex $v' \in N(v) \setminus N(\{ u, v \})$, the vertices $(u, w, v, v')$ form an induced co-square.
\end{proof}

We say that a connected component is \emph{trivial} if it consists of one vertex.

\begin{theorem}
  \label{thm:cosquare}
  There is an~$O(m + c^{\nicefrac{44}{7}})$-time randomized algorithm to detect an induced co-square.
\end{theorem}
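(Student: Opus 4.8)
The plan is to split off a structurally forced, cheap part of the graph and reduce to a small residual instance on which we can afford the randomized $O(m^{11/7})$-time induced-$C_4$/$\co{C_4}$ detector of Williams~\cite{WWWY15}: since a co-square is an induced $2K_2$, shrinking to $O(c^4)$ edges and invoking that algorithm gives $O((c^4)^{11/7}) = O(c^{44/7})$, and the randomization is inherited from Williams' routine. First I would classify vertices by degree, calling $v$ \emph{heavy} if $\deg(v)\ge 2c-1$ and \emph{light} if $\deg(v)<c$. Applying \Cref{lemma:cosquare} to every non-adjacent pair in which one endpoint has degree $\ge c$ and the other degree $\ge 2c-1$, I either exhibit a co-square outright or conclude that the heavy vertices form a clique and that every vertex of degree $\ge c$ is adjacent to every heavy vertex. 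In the latter case at most two of the four vertices of any induced $2K_2$ can be heavy, and if two are heavy they must constitute one of its two edges; hence the opposite edge lies among low-degree vertices, and a short argument shows that whenever the $2K_2$ uses a heavy vertex the opposite edge is in fact light--light (a medium-degree vertex would be adjacent to all heavy vertices). This confines the search to (a) induced $2K_2$'s living entirely in the low-degree subgraph $G[L]$, and (b) those formed by a light edge together with a disjoint non-adjacent edge incident to a heavy vertex.

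Case (a) I would handle by a bounded-degree argument: $G[L]$ has maximum degree below $2c$, so the edges ``close'' to any fixed edge (sharing a vertex or joined by an edge) number only $O(c^2)$; thus if $G[L]$ has more than a suitable $O(c^2)$ threshold of edges, a disjoint non-adjacent edge can be extracted greedily, yielding an \emph{induced} $2K_2$, and otherwise $G[L]$ already has $O(c^2)$ edges and Williams' algorithm finishes it in $O(c^{22/7})$ time. The substantial work is case (b). Here I would fix a light edge $yz$; either $G - N[\{y,z\}]$ contains an edge, which immediately completes an induced co-square, or the set $N[\{y,z\}]$ (of size below $2c$) is a vertex cover of $G$. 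In the latter situation $G$ has a vertex cover of size $O(c)$, and I would pass to an equivalent instance with only $O(c^4)$ edges -- intuitively because an induced $2K_2$ uses at most two vertices outside the cover, and the cover together with the $c$-closure bounds how many genuinely distinct such outside vertices need to be retained -- before running Williams' $O(m^{11/7})$-time detector on the $O(c^4)$-edge instance in $O(c^{44/7})$ time. Summing the $O(m)$ preprocessing with the residual costs gives the stated bound.

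The main obstacle is case (b): carrying out the reduction to an $O(c^4)$-edge instance that provably preserves the existence of an induced $2K_2$, and doing so -- together with the vertex classification and the detection/forcing of the small vertex cover -- within the $O(m)$ budget. One has to argue that retaining only boundedly many representatives for each relevant interaction with the cover is sound even though the light vertices can exhibit $2^{\Theta(c)}$ distinct neighbourhood patterns, which is precisely why a direct combinatorial enumeration is avoided in favour of the $O(m^{11/7})$ detector. Care is also needed to confirm that no co-square is lost when passing to $G[L]$ or to the cover-based residual instance, and that the greedy extraction in the dense subcases returns an \emph{induced} $2K_2$ rather than merely two disjoint edges.
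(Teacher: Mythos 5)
Your overall architecture is genuinely different from the paper's. The paper shows that the set~$C$ of heavy vertices (degree at least~$2c-1$) is a clique or a co-square exists (via \Cref{lemma:cosquare}), that $G-C$ has at most one non-trivial component~$S$ (else a co-square is immediate), that $S$ has diameter at most~$3$ unless it contains a co-square (hence $|S|=O(c^3)$ by the degree bound), handles co-squares using the isolated vertices of $G-C$ directly in $O(m+c^5)$ time, and finally runs the $O(m^{\nicefrac{11}{7}})$-time algorithm of \citet{WWWY15} on $G[C\cup S]$, which has $O(c^4)$ edges. Your plan --- \Cref{lemma:cosquare}-based preprocessing, then a vertex-cover win-win plus a kernelization --- is a plausible alternative route, but as written it has a genuine gap, and you name it yourself: the reduction of the vertex-cover case to an instance with $O(c^4)$ edges that provably preserves the existence of an induced $2K_2$ is never carried out. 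That reduction is exactly the step on which the claimed $O(c^{\nicefrac{44}{7}})$ bound rests; everything else in your case (b) (the $O(m)$ win-win, running Williams on the residual instance) is routine. A proof whose central step is declared ``the main obstacle'' and supported only by intuition is a plan, not a proof.

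The gap is fillable, and your worry about $2^{\Theta(c)}$ neighbourhood patterns is a red herring: no representative per pattern is needed. Let $X=N[\{y,z\}]$ be the vertex cover, $|X|\le 2c$, and let $I=V(G)\setminus X$ be the complementary independent set. For each $x\in X$ keep an arbitrary set $K_x$ of $\min\{2c-1,\,|N(x)\cap I|\}$ neighbours of $x$ in~$I$ and delete all other vertices of~$I$. An induced $2K_2$ uses at most two vertices of~$I$; each such vertex, hanging off its cover-neighbour~$x$, is constrained only by having to avoid the neighbourhoods of at most two cover vertices that are non-adjacent to~$x$, and by $c$-closure each of those neighbourhoods meets $N(x)$ in fewer than~$c$ vertices, so among $2c-1$ kept neighbours of~$x$ a valid replacement survives (replace the at most two outside vertices one after the other, excluding the first choice when selecting the second). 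This kernel is sound, has $O(c^2)$ vertices and $O(c^3)$ edges, and Williams then runs in $O(c^{\nicefrac{33}{7}})$ time --- better than your $O(c^4)$-edge target. A second, smaller flaw: you define light as degree~$<c$ but later treat $G[L]$ as having maximum degree below~$2c$; for your split into cases (a) and (b) to be exhaustive, $L$ must consist of \emph{all} non-heavy vertices (degree at most $2c-2$), since with $L$ meaning degree~$<c$ a $2K_2$ containing a medium-degree vertex but no heavy vertex falls into neither case.
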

\begin{proof}
  Let~$C$ be the set of vertices of degree at least~$2c - 1$.
  If~$C$ is not a clique, then~$G$ contains an induced co-square by \Cref{lemma:cosquare}.
  So assume that~$C$ is a clique.
  Let~$S_1, \dots, S_\ell$ be the connected components of~$G - C$.
  If all components are trivial, then there is no induced co-square.
  Moreover, if there are more than one non-trivial connected component, we find an induced co-square.
  Thus, we assume that there is exactly one connected component~$S$ with at least one edge.

  If there is no co-square in~$S$, then the diameter of~$G[S]$ is at most three.
  Since every vertex in~$S$ has degree at most~$2c - 2$, we can assume that~$|S| \in O(c^3)$.
  Furthermore, if~$|C| \ge 2c$, then there is an induced co-square.
  For each vertex~$v \in S$, there exists a vertex in~$C$ that is not adjacent to~$v$, because~$|C| \ge 2c$ and~$v$ has at most~$2c - 2$ neighbors.
  Thus, the~$c$-closure of~$G$ yields that each vertex~$v \in S$ has at most~$c - 1$ neighbors in~$C$.
  Consequently, for an edge~$vv' \in E(G[S])$, there are two vertices~$u, w \in C$ that are not adjacent to~$v$ or~$v'$, which form a co-square along with~$v$ and~$v'$.
  Hence, we can also assume that~$|C| \le 2c$.

  Let~$I = V(G) \setminus (C \cup S)$ be the set of isolated vertices in~$G - C$.
  Now we describe an~$O(m + c^5)$-time algorithm to find a co-square containing a vertex of~$I$.
  For each vertex~$v \in C$, we check whether~$v$ has a neighbor in~$I$.
  This can be done in~$O(m)$ time.
  Then, for each vertex~$v \in C$ with at least one neighbor in~$I$, we check whether there is an edge~$uw \in E(G[S])$ such that~$u, w \notin N(v)$.
  If there is such an edge~$uw$, then~$(v, v', u, w)$ is a co-square, where~$v' \in I$ is a vertex adjacent to~$v$.
  Otherwise, we can conclude that there is no co-square containing a vertex of~$I$.
  Note that this procedure takes~$O(c^5)$ time, because~$|C| \le 2c$ and~$|E(G[S])| \in O(c^4)$.

  Finally, it remains to find a co-square in~$G[C \cup S]$.
  Using the~$O(m^{\nicefrac{11}{7}})$-time algorithm of \citet{WWWY15}, this can be done in~$O(c^{\nicefrac{44}{7}})$.
\end{proof}



\subparagraph*{Diamond.}
The following characterization for diamond-free graphs is used in algorithms that find a diamond and run in time~$O(m^{\nicefrac{3}{2}})$~\cite{EG04} and~$O(\Delta m)$~\cite{KKM00} respectively.
\begin{lemma}[{\cite[Lemma 3]{KKM00}}]
  \label{lemma:diamond}
  A graph $G$ is diamond-free if and only if $G[N(v)]$ is $P_3$-free for each vertex $v \in V(G)$.
\end{lemma}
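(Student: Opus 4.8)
The plan is to prove both implications by contraposition, exhibiting a direct correspondence between induced diamonds in $G$ and vertices whose neighborhood contains an induced $P_3$. The guiding observation is that a diamond on four vertices consists of two ``apex'' vertices of degree three, each adjacent to the three others, together with the remaining nonadjacent pair; equivalently, if $v$ is one such apex, then the other three vertices induce a $P_3$ inside $N(v)$, with the second apex as its center. Both directions then reduce to verifying that this edge/non-edge pattern matches.

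First I would establish that a $P_3$ in a neighborhood forces a diamond, which yields the direction ``diamond-free $\Rightarrow$ every $G[N(v)]$ is $P_3$-free'' by contraposition. Suppose some $G[N(v)]$ contains an induced $P_3$ on vertices $(x, y, z)$, i.e.\ $xy, yz \in E(G)$ and $xz \notin E(G)$. Since $x, y, z \in N(v)$, we also have $vx, vy, vz \in E(G)$. Then on $\{ v, x, y, z \}$ the present edges are exactly $vx, vy, vz, xy, yz$ and the only missing edge is $xz$, so these four vertices induce a diamond (with $v$ and $y$ as its degree-three vertices). Hence $G$ is not diamond-free.

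For the converse direction I would argue the contrapositive ``$G$ has a diamond $\Rightarrow$ some $G[N(v)]$ has a $P_3$.'' Suppose $G$ contains an induced diamond on $\{ a, b, c, d \}$, where $a$ and $b$ are the two degree-three vertices (so $ab \in E(G)$, and both are adjacent to $c$ and $d$) and $cd \notin E(G)$. Taking $v = a$, we have $b, c, d \in N(a)$, and within $G[N(a)]$ the edges among these three vertices are precisely $bc$ and $bd$, while $cd$ is absent. Thus $(c, b, d)$ is an induced $P_3$ in $G[N(a)]$, so $G[N(v)]$ fails to be $P_3$-free for $v = a$.

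Both directions are immediate once the roles of the diamond's vertices are fixed, so I do not expect any substantial obstacle: the statement is merely an explicit local reformulation of ``diamond'' as ``apex vertex plus an induced $P_3$ in its neighborhood.'' The only point requiring care is to identify the center of the $P_3$ with the second apex of the diamond (and vice versa), ensuring that exactly the nonedge of the $P_3$ becomes the unique missing edge of the diamond.
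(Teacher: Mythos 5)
Your proof is correct. Note that the paper itself gives no proof of this lemma---it is imported verbatim as Lemma~3 of the cited work of Kloks, Kratsch, and M\"{u}ller---so there is no in-paper argument to compare against; your two-direction contrapositive argument (apex vertex plus induced $P_3$ in its neighborhood equals diamond, and conversely a degree-three vertex of a diamond sees an induced $P_3$ among the other three vertices) is exactly the standard, self-contained verification of this equivalence, and both directions check out: in each case the unique non-edge of the $P_3$ corresponds to the unique missing edge of the diamond.
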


Using this characterization, we show that if the input graph does not contain any induced gem (five-vertex graph formed by $P_4$ and an additional universal vertex), then an induced diamond can be detected in $O(cn^2)$ time.

\begin{theorem}
  There is an $O(cn^2)$-time algorithm to detect an induced diamond in gem-free graphs.
\end{theorem}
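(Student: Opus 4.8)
The plan is to reduce diamond detection to a neighbourhood problem via \Cref{lemma:diamond}: $G$ contains an induced diamond if and only if $G[N(v)]$ contains an induced $P_3$ for some vertex $v$. Indeed, a $P_3$ on $a, b, c$ (with $ab, bc \in E(G)$ and $ac \notin E(G)$) inside $N(v)$ yields the diamond on $\{v, a, b, c\}$ with hubs $v, b$ and tips $a, c$. So I would iterate over all vertices $v$ and, for each, test whether $H_v := G[N(v)]$ is $P_3$-free, stopping and reporting a diamond as soon as some $H_v$ fails. The gem-free assumption enters exactly here: a gem is a vertex whose neighbourhood contains an induced $P_4$, so gem-freeness means every $H_v$ is $P_4$-free, i.e.\ a cograph. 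For any graph, being $P_3$-free is equivalent to being a disjoint union of cliques, so on each cograph $H_v$ the task becomes deciding whether it is such a union and, if not, exhibiting a witnessing $P_3$ (which its cotree reveals).

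For the running time I would exploit that the neighbourhoods cannot collectively contain many non-edges. A non-edge $\{a,c\}$ inside $N(v)$ is exactly a $P_3$ centred at $v$, so by \Cref{lem:nump3} the number of non-edges summed over all neighbourhoods equals the total number of induced $P_3$'s, which is $O(cn^2)$. Hence the budget suffices provided each $H_v$ is processed in time linear in $\deg(v)$ plus the number of non-edges of $H_v$, that is, linear in the size of its complement $\co{H_v}$. Since the complement of a cograph is again a cograph, I would build the cotree of $\co{H_v}$ in time linear in $\co{H_v}$ and read off from it whether $H_v$ is a disjoint union of cliques, extracting a $P_3$ otherwise. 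Summed over all $v$ this is $O(m + cn^2) = O(cn^2)$.

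The hard part—and the reason the statement is restricted to gem-free graphs—is making the non-edges inside each neighbourhood available within this budget. The generic route of listing all $P_3$'s via \Cref{algo:p3s} and \Cref{thm:enump3} is unusable, because its cost also scales with the number of triangles, which is already $\Theta(n^3)$ on $K_n$ (which is $1$-closed and gem-free) whereas $cn^2 = \Theta(n^2)$ there. Equivalently, certifying that an almost-complete neighbourhood contains \emph{no} non-edge must not cost $\Theta(\deg(v)^2)$, even though distinguishing $K_s$ from $K_s-e$ needs $\Omega(s^2)$ adjacency queries when a single neighbourhood is viewed in isolation; this forces the extraction to be amortised across neighbourhoods rather than done pair-by-pair. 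I expect this to be the main obstacle, and the point where the cograph structure from gem-freeness is genuinely needed to recover the in-neighbourhood non-edges (or certify their absence) output-sensitively. It is precisely this step for which no replacement is apparent once the gem-free assumption is dropped, which is the gap the authors flag as open.
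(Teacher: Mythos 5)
Your proposal sets up the same reduction as the paper (via \Cref{lemma:diamond}, test each $G[N(v)]$ for an induced $P_3$), correctly observes that gem-freeness makes every neighbourhood a cograph, and correctly counts the total non-edge budget $\sum_v |E(\co{G[N(v)]})| = O(cn^2)$ via \Cref{lem:nump3}. But the proof has a genuine gap, and you flag it yourself: the entire algorithmic content is deferred to the step ``make the non-edges inside each neighbourhood available within the budget,'' for which you offer no method. Building the cotree of $\co{H_v}$ presupposes having $\co{H_v}$ explicitly, which is exactly the unresolved extraction problem, so the cotree machinery is circular rather than a solution. Worse, you identify this missing step with ``the gap the authors flag as open,'' which conflates two different things: the paper's open question is computing common neighbourhoods of \emph{all} non-adjacent pairs in $O(cn^2)$ time (needed to drop gem-freeness), whereas the gem-free case stated in the theorem is fully solved by the paper --- without ever enumerating the non-edges.

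The paper's actual route, which your proposal is missing, works per neighbourhood and uses $c$-closure twice. First, a greedy procedure (\Cref{algo:diamond}) either finds a $P_3$ in $G[N(v)]$ or returns a maximal independent set $I \subseteq N(v)$ whose ``private'' sets $S_u = N(u) \cap N(v)$, $u \in I$, are pairwise disjoint; its $O(cn)$ running time follows because every vertex of $N(u)\setminus N(v)$ witnesses a $P_3$ $(v,u,w)$, so $c$-closure bounds $|\bigcup_{u\in I} N(u)\setminus N(v)| < cn$. Second, gem-freeness is used not to get cographs but to forbid edges between $S_u$ and $S_{u'}$ for $u \ne u'$ (such an edge would give an induced $P_4$ in $N(v)$, hence a gem with $v$), reducing the test to ``is each $S_u$ a clique?''. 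Third --- and this directly refutes your claim that the $\Omega(s^2)$ adjacency-query bound ``forces the extraction to be amortised across neighbourhoods'' --- cliqueness of a large $S_u$ is certified in $O(c^2 + c|S_u|)$ time: pick $T \subseteq S_u$ with $|T| = c$, check $T$ is a clique, check every vertex of $S_u \setminus T$ against all of $T$; if all these pairs are adjacent, $c$-closure forces all remaining pairs in $S_u$ to be adjacent, since any two such vertices already share the $c$ common neighbours $T$ (in fact $T \cup \{u,v\}$). So the quadratic lower bound for distinguishing $K_s$ from $K_s - e$ simply does not apply once the graph is promised to be $c$-closed, and no cross-neighbourhood amortisation is needed. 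Your proposal, as written, reduces the theorem to a harder (and, per the paper, open) enumeration problem instead of proving it.
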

\begin{proof}
  We show that for each vertex $v \in V(G)$, one can verify in $O(cn)$ time whether $G[N(v)]$ is $P_3$-free.
  By \Cref{lemma:diamond}, this yields an $O(cn^2)$-time algorithm for finding a diamond.
  First, we use an $O(cn)$-time algorithm (\Cref{algo:diamond}) that determines that either
  \begin{itemize}
    \item
      $G[N(v)]$ is not $P_3$-free, or
    \item
      there is an inclusion-maximal independent set $I \subseteq N(v)$ in $G[N(v)]$ such that $N(u) \cap N(w) \cap N(v) = \emptyset$ for all $u, w \in I$.
  \end{itemize}
  \begin{algorithm}[t]
    \caption{An algorithm for finding a $P_3$ or an independent set in the neighborhood of $v$.}
    \label{algo:diamond}
    \begin{algorithmic}[1]
      \Function{FindIS}{$G, v$}
        \State $I \leftarrow \emptyset$, $J \leftarrow N(v)$.
        \While{$J \ne \emptyset$} \Comment{Each vertex in $N(v) \setminus J$ has at least one neighbor in $I$.} \label{line:diamond:loop}
          \State Let $u$ be an arbitrary vertex in $J$.
          \State $I \leftarrow I \cup \{ u \}$.
          \ForAll{$w \in N(u)$}
            \State \algorithmicif\ $w \notin N(v)$ \algorithmicthen\ \Continue.
            \State \algorithmicif\ $w \notin J$ \algorithmicthen\ there is a $P_3$ in $G[N(v)]$; \Return \label{algo:diamond:p3}
          \EndFor
          \State $J \leftarrow J \setminus N[u]$
        \EndWhile
        \State \Return $I$.
      \EndFunction
    \end{algorithmic}
  \end{algorithm}
  Basically, \Cref{algo:diamond} keeps adding some vertex $u \in J$ to an independent set $I$, until there is no vertex left in $J$. 
  In doing so, \Cref{algo:diamond} removes neighbors of $u$ from $J$.
  Suppose that $u$ has a neighbor $w$ not in $J$ (Line~\ref{algo:diamond:p3}).
  Then, there is a $P_3$, because $w$ has a neighbor in $I \setminus \{ u \}$.
  We emphasize that \Cref{algo:diamond} only requires the adjacency list of $G$;
  we avoid constructing the adjacency list of $G[N(v)]$, since it could take $\Omega(m)$ time.

  We show that the time complexity of \Cref{algo:diamond} is $O(cn)$.
  In particular, consider the case \Cref{algo:diamond} finds an independent set $I$ (the proof is analogous for the other case \Cref{algo:diamond} finds a $P_3$).
  For each vertices $u \in I$ and $w \in N(u)$, \Cref{algo:diamond} spends $O(1)$ time.
  It clearly holds that $|\bigcup_{u \in I} N(u)| = |\bigcup_{u \in I} N(u) \cap N(v)| + |\bigcup_{u \in I} N(u) \setminus N(v)|$.
  Since $N(u) \cap N(u') \cap N(v) = \emptyset$ for all $u, u' \in I$ by the choice of $I$, we have $|\bigcup_{u \in I} N(u) \cap N(v)| \le \deg(v) \in O(n)$.
  Moreover, we have $|\bigcup_{u \in I} N(u) \setminus N(v)| < cn$:
  To see why, note that there is a $P_3$ on $(v, u, w)$ for each vertex $w \in N(u) \setminus N(v)$.
  For each choice of $w \notin N(v)$, the vertices $v$ and $w$ have at most $c - 1$ neighbors in common, because $G$ is $c$-closed.
  Thus, $|\bigcup_{u \in I} N(u)| \in O(cn)$ and, therefore, \Cref{algo:diamond} executes in $O(cn)$ time.

  If \Cref{algo:diamond} finds a $P_3$ in the neighborhood of $v$, then we immediately find a diamond in $G$.
  Thus, assume that \Cref{algo:diamond} returns an independent set $I$.
  Let $S_u = N(u) \cap N(v)$ for each $u \in I$.
  Note that $S_u$ is pairwise disjoint for all $N_u$ by the construction of $I$.
  Since $G$ is gem-free, there is no edge $ww' \in E(G)$ such that $w \in S_u$ and $w' \in S_{u'}$ for $u \ne u' \in I$.
  Thus, in order to decide whether there is a~$P_3$ in~$N(v)$ it remains to decide whether $S_u$ is a clique or not for all $u \in I$.
  Let $I_1 = \{ u \in I \mid |S_u| \le c \}$ and let $I_2 = I \setminus I_1$.
  For each $u \in I_1$, we verify whether all the vertices in $S_u$ are pairwise adjacent.
  This takes $\sum_{u \in I_1} |S_u|^2 \le c \sum_{u \in I_1} |S_u| \le c \cdot \deg(v) \in O(cn)$ time.
  For each $u \in I_2$, we do as follows:
  Let $T \subseteq S_v$ be an arbitrary subset of exactly $c$ vertices.
  We verify that the vertices in $T$ are pairwise adjacent in $O(c^2)$ time.
  Then, we verify whether $w$ and $w'$ are adjacent for each vertices $w \in T$ and $w' \in T \setminus S_v$.
  If $ww' \notin E(G)$ for some $w \in T$ and $w' \in T \setminus S_v$, then $(w, u, w')$ forms a $P_3$.
  Otherwise, $S_v$ is a clique:
  By the $c$-closure of $G$, any pair of vertices with at least $c$ common neighbors are necessarily adjacent.
  It is easy to see that this procedure takes $O(c^2 + c \cdot |S_u|)$ time for each $u \in I_2$.
  Since $\deg(v) \ge c \cdot |I_2|$, its overall running time is $O(\sum_{u \in I_2} c^2 + \sum_{u \in I_2} c \cdot |S_u|) = O(\nicefrac{\deg(v)}{c} \cdot c^2 + c \cdot \deg(v)) = O(cn)$.
\end{proof}

Unfortunately, we were unable to get rid of the condition of the input graph being gem-free in the above theorem.
One obstacle for this the following: We do not know whether one can compute the common neighborhood for each pair of non-adjacent vertices in~$O(cn^2)$ time, despite the output being of size~$O(cn^2)$.
If this could be done, then there would be easy~$O(cn^2)$ and~$O(c^2n^2)$-time brute-force algorithms for enumerating all~$P_3$'s and diamonds respectively.

\section{Conclusion}

We provided a first systematic study of detecting and enumerating small subgraphs in a given $c$-closed host graph.
While we provide several upper and lower bounds, there remain a couple of open questions (see question marks in \Cref{tab:overview}).
Probably the most important one is whether the common neighborhood for each pair of non-adjacent vertices can be computed in~$O(cn^2)$ time.
A positive answer would immediately improve several of our results.
In particular, it would provide tight algorithms for enumerating~$P_3$'s, squares, and diamonds.
Moreover, it would give first subcubic-time algorithms for detecting diamonds in $c$-closed graph.
Investigating parameterized algorithms for the problems studied in this work with respect to the weak $c$-closure \cite{FRSWW20} (a related but smaller parameter than $c$-closure) is a further task for future research.
Finally, besides detecting and enumerating subgraphs, the task of counting subgraphs is also relevant and not studied so far for $c$-closed host graphs.

\bibliographystyle{abbrvnat}
\bibliography{ref}

\end{document}